\newcommand{\refeq}[1]{Eq.~(\ref{#1})}
\newcommand{\refsec}[1]{Section~\ref{#1}}
\newcommand{\reffig}[1]{Figure~\ref{#1}}
\newcommand{\refapp}[1]{Appendix~\ref{#1}}
\theoremstyle{plain}
\newtheorem{theorem}{Theorem}
\newtheorem{lemma}[theorem]{Lemma}
\newtheorem{corollary}[theorem]{Corollary}
\theoremstyle{definition}
\newtheorem{definition}{Definition}
\newtheorem{problem}{Problem Statement}
\newcommand{\gc}{\mathrm{gc}}
\newcommand{\bp}{\mathrm{bp}}
\newcommand{\HIsing}{H_{\mathrm{Ising}}}
\newcommand{\norm}[1]{\|#1\|}
\newcommand{\Jij}{J_{i,j}} % use command for consistent comma usage in subscript
\newcommand \R{\mathbb{R}}
\DeclareMathOperator{\sgn}{sgn}
\newcommand{\Yb}{$^{171}\mathrm{Yb}^+$}
\let\isout\sout \renewcommand{\sout}[1]{\ifmmode\text{\isout{\ensuremath{#1}}}\else\isout{#1}\fi}
\newcommand{\tsout}[1]{} % remove struck-out text
\newcommand{\change}[1]{#1} % remove change color
\newcommand{\rep}[2]{\tsout{#1}\change{#2}}
\begin{document}

\title{Generating Target Graph Couplings for QAOA from Native Quantum Hardware Couplings}

\author{Joel Rajakumar}
\altaffiliation{\change{Present address: University of Maryland, College Park, MD 20742, USA}}
\affiliation{Georgia Tech Research Institute, Atlanta, GA 30332, USA}

\author{Jai Moondra}
\affiliation{Georgia Institute of Technology, Atlanta, GA 30332, USA}

\author{\change{Bryan Gard}}
\affiliation{Georgia Tech Research Institute, Atlanta, GA 30332, USA}

\author{Swati Gupta}
\affiliation{Georgia Institute of Technology, Atlanta, GA 30332, USA}

\author{Creston D. Herold}
\email[]{Creston.Herold@gtri.gatech.edu}
\affiliation{Georgia Tech Research Institute, Atlanta, GA 30332, USA}

\date{\today}

\begin{abstract}
We present methods for constructing any target coupling graph using limited global controls in an Ising-like quantum spin system. Our approach is motivated by implementing the quantum approximate optimization algorithm (QAOA) on trapped ion quantum hardware to find approximate solutions to Max-Cut. We present a mathematical description of the problem and provide approximately optimal algorithmic constructions \rep{which}{that} generate arbitrary unweighted coupling graphs with $n$ nodes in $O(n)$ global \rep{control pulses}{entangling operations} and weighted graphs with $m$ edges in $O(m)$ \rep{pulses}{operations}. These upper bounds are not tight in general, and we formulate a mixed-integer program to solve the graph coupling problem to optimality. We perform numeric experiments on small graphs with $n\le8$ \change{and} show that optimal sequences, which use fewer operations, can be found using mixed-integer programs. \change{Noisy simulations of Max-Cut QAOA show that our implementation is less susceptible to noise than the standard gate-based compilation.}
\end{abstract}

\maketitle

\section{Introduction}
While the capabilities of quantum computing hardware continue to increase both in qubit number and quality of operations, the field is still a long way from quantum error correction. Numerous near-term applications for the so-called NISQ hardware have been proposed including quantum chemistry, the variational quantum eigensolver, and optimization \cite{preskill_quantum_2018}.
Important experimental demonstrations of these applications have been performed, but to date the hardware has been too small to be computationally useful compared to classical methods \cite{klco_quantum-classical_2018, nam_ground-state_2020, arute_hartree-fock_2020}.
Quantum hardware naturally performs better on problems which match the native qubit connectivity, while general problems tend to require significant compilation resulting in increased gate count and worse performance \cite{harrigan_quantum_2021}. 

A prominent example of an optimization problem is Max-Cut, which asks to partition the vertices of a given weighted graph $G = (V,E,z)$, with edge weights $z$, into two sets $S, V\setminus S$ so that the total weight of the edges in the cut $(S,V-S)$ is maximized.
To solve Max-Cut with the Quantum Approximate Optimization Algorithm (QAOA) \cite{farhi_quantum_2014, bravyi_obstacles_2020, farhi_quantum_2020},
each vertex is represented by a qubit and the $Z$-eigenstates label the two sets. The cut size is encoded as the energy of a cost Hamiltonian, consisting of a sum over two-qubit operators for each edge in $E$, where the Max-Cut corresponds to the lowest energy eigenvector.
Max-Cut QAOA was recently demonstrated on tens of ions \cite{pagano_quantum_2020}, however, the problem graph matched the native hardware coupling graph. To implement Max-Cut for an any graph, an arbitrary qubit coupling graph (cost Hamiltonian) must be produced.

In this paper, we will show how arbitrary coupling operations can be constructed on a quantum spin system using a limited set of operations.
While we specialize to a trapped-ion system, our method is general and is applicable to other quantum systems with natural long-range interactions like NMR manipulation of multi-spin molecules \cite{Vandersypen2001}, arrays of Rydberg atoms \cite{levine2018, graham2019, Henriet2020}\change{, or certain superconducting qubits coupled to a common bus resonator \cite{xu_emulating_2018,onodera_quantum_2020}.}
While our results have immediate application to efficient implementation of Max-Cut QAOA on crystals of trapped ions, they may also enable new classes of quantum simulations.

In \tsout{section} \refsec{sec:background}, we  provide a physical definition of a coupling graph as well as implementation details for \tsout{the} trapped ion systems.
In \refsec{sec:problem}, we physically motivate the problem and then give a problem statement in purely mathematical terms. 
In \refsec{sec:solution}, we discuss various solution methods, including (i) a lower bound of $\Omega(\log n)$
\footnote{We use standard notation for limiting behavior of real-valued functions. Given functions $f, g: \mathbb{N} \rightarrow \mathbb{R}$, we say $f(n) = O(g(n))$ if there exists a positive real $M$ and a positive integer $n_0$ such that $|f(x)| \leq M\:|g(x)|$ for all $n\geq n_{0}$, and we say that $g(n) = \Omega(f(n))$ if $f(n) \in O(g(n))$.} on the graph coupling number, i.e., smallest possible number of \rep{pulses}{Ising operations} required to construct a general graph with $n$ nodes, (ii) an $O(n)$ \rep{pulse}{operation} sequence for the  construction of unweighted graphs which we refer to as {\it union-of-stars}, and (iii) an $O(m)$ \rep{pulse}{operation} for constructing arbitrary weighted graphs with $m$ edges.

\rep{Finally, p}{P}reliminary numerical experiments using a mixed integer program (MIP) are presented in \refsec{sec:results} which validate the performance of our constructions and show that optimal sequences can be found which use fewer \rep{pulses}{Ising operations} than the upper bounds from the constructions. This MIP can be used to find optimal \tsout{pulse} sequences for small instances (up to the scale of the NISQ hardware in the near future), but for large graphs this approach quickly becomes intractable, and instead, the union-of-stars construction can be used to find a provable near-optimal solution. \change{Numerical simulations of Max-Cut QAOA including noise follow in \refsec{sec:noisy}. Finally, an estimate of hardware execution time is given in \refsec{sec:runtime}.}

\section{Coupling Operators and Physical Controls}\label{sec:background}
Many quantum systems exhibit intrinsic couplings between qubits. Physically, these spin-spin couplings comprise the zero field Ising model with coupling Hamiltonian
\begin{equation}
    \label{eq:HIsing}
    \HIsing = \sum_{i=1}^{n-1} \sum_{j=i+1}^{n} \Jij \sigma_i^z \sigma_j^z \,,
\end{equation}
where $\sigma_j^z$ denotes the Pauli-Z matrix acting on the $j$th ion \footnote{Explicitly, this is defined on an $n$-qubit system by the tensor product $\sigma_i^z = \mathbb{I}^{\otimes i-1} \otimes \sigma^z \otimes \mathbb{I}^{\otimes n-i}$, i.e. an embedding of the $2 \times 2$ Pauli-Z matrix in $2^n \times 2^n$ space.}. The term $\sigma_i^z \sigma_j^z$ is a ``$ZZ$-coupling'' between qubits $i$ and $j$, and the values of the \change{interaction energies} $\Jij$ depend on the details of the underlying quantum hardware.

\tsout{We define an arbitrary coupling operator as}
\begin{equation*}
    \tsout{C = \sum_{i=1}^{n-1} \sum_{j=i+1}^{n} a_{i,j} \sigma_i^z \sigma_j^z}
\end{equation*}
\tsout{where $a_{i,j} \in \mathbb{R}$ and $\sigma_i^z \sigma_j^z$ is a unit-strength coupling between qubits $i$ and $j$. In this paper, we will use an abstraction of the coupling operation called the coupling graph, as defined below.}

\begin{definition} (Coupling Graph)
We define a coupling graph $G(V,E), |V| = n$ as a mathematical abstraction of a coupling operation on an $n$-qubit system, where each vertex $v \in V$ represents a qubit and each edge $e \in E$ represents the strength of the $ZZ$-coupling between two qubits. \tsout{The coupling operation given by \refeq{eq:coupling} produces a coupling graph with adjacency matrix $A$ of entries $a_{i,j} = a_{j,i}$.}
\end{definition}

\change{With this definition, an arbitrary coupling operator
\begin{equation}
    \label{eq:coupling}
    C = \sum_{i=1}^{n-1} \sum_{j=i+1}^{n} a_{i,j} \sigma_i^z \sigma_j^z
\end{equation}
has a coupling graph with adjacency matrix $A$ of entries $a_{i,j} = a_{j,i}$, where $a_{i,j} \in \mathbb{R}$.}
$\HIsing$ is a particular physical coupling operator (compare to \refeq{eq:coupling}), and the adjacency matrix of the corresponding coupling graph has elements $a_{i,j} = \Jij$.

In \refsec{sec:problem}, we show that a coupling operator $C$ with any arbitrary adjacency matrix can be constructed from two simple operations: (A) a global coupling operation, $\HIsing$, native to the quantum hardware, and (B) individual qubit bit flips. Here, we specialize to collections of ions trapped in a common potential and describe how each of these operations is realized.

\subsection{$\HIsing$ Implementation}

For trapped ion crystals, a wide variety of coupling graphs, facilitated by the collective normal modes of motion of the ions, are possible.
One method for generating $\HIsing$ is the M{\o}lmer-S{\o}rensen (MS) interaction \cite{Sorensen2000}.
With one-dimensional ion crystals, the MS interaction has been used extensively for quantum simulations of interacting spins \cite{zhang_observation_2017,kaplan_many-body_2020} as well as lattice gauge models \cite{kokail_self-verifying_2019}.
For experiments to date with two-dimensional ion crystals, an optical dipole force (ODF) was used to produce $\HIsing$ \cite{britton_engineered_2012,bohnet_quantum_2016}.
While the exact dependence of the interaction strength on the optical fields is different, the dependence of $\Jij$ on the motional mode structure is identical. 

The MS interaction is created by illuminating the ions with a pair of lasers tuned near red and blue motional sidebands, and the $\Jij$ in \refeq{eq:HIsing} are
\begin{equation}\label{eq:MS-Jij}
		\Jij=\Omega_i \Omega_j \frac{\hbar^2 (\Delta k)^2}{2M} \sum_m \frac{b_{i,m} b_{j,m}}{\mu^2 - \omega_m^2} \,,
\end{equation}
where $M$ is the single-ion mass, $\Delta k$ is the momentum imparted by the laser interaction, $\hbar$ is the reduced Planck constant, and $\Omega_j$ is the Rabi rate for the $j$th ion, which depends on atomic matrix elements and the optical intensity at the ion \cite{kim_entanglement_2009}. 
Within the sum over motional modes $m$, the normal mode frequencies have angular frequency $\omega_m$, and the $b_{i,m}$ are the displacement amplitudes for ion $i$ in mode $m$ (see, e.g., \cite{James1998}); $\mu$ is the angular frequency of the laser with respect to the qubit frequency.
The natural basis for the MS interaction can be taken to be $\sigma_i^x \sigma_j^x$, however it can be transformed to $\sigma_i^z \sigma_j^z$ through global rotations with the same pair of laser beams \cite{Lee2005,pino_demonstration_2021}.

We define the \rep{gate}{MS} detuning as the difference between the laser detuning and the $m$th normal mode, i.e.:
\begin{equation}
    \delta_m = \mu-\omega_m.
\end{equation}
While a wide variety of $\Jij$ are possible through variation of the laser intensities, the \rep{gate}{MS} detuning, and the motional modes \cite{islam_quantum_2012}, here we focus on two specific cases. The $m=0$ mode is the center of mass mode for which all the $b_{j,0}$ are equal. 
If the MS interaction is detuned very close to this mode such that $|\delta_0| \ll \delta_m ~\forall m\ne0$, the $m=0$ term in the sum in \refeq{eq:MS-Jij} dominates, and we have:
\begin{equation}
    \label{eq:constantJ}
    \Jij = \sgn\left(\delta_0\right) J_0, % use J_0 to distinguish from matrix J
\end{equation}
for positive constant $J_0$.
In the language of spin models of magnetism, this produces an infinite range ferromagnetic or antiferromagnetic interaction depending on the sign of the detuning \cite{kim_entanglement_2009}. In particular, \rep{in this work, we assume that $J_{i,j}$ all identically equal to either 1 or -1 for $i \neq j$ and $J_{i,j}=0$ otherwise.}{we show in \refsec{sec:uos} that this equal, all-to-all interaction permits efficient generation of dense, unweighted coupling graphs.} 

\subsection{Bit Flip Implementation}
One-qubit operations can be envisioned as rotations of a qubit vector on the Bloch sphere. The unitary $R_j^x(\theta) = \exp\{-i(\theta/2) \sigma_j^x\}$ is a rotation through an angle $\theta$ about the $x$-axis. We define a bit flip on the $j$th qubit as:
\begin{equation}\label{eq:bitflip}
    X_j = R_j^x(\pi) = -i \sigma_j^x.
\end{equation}
$X_j$ will flip the state of a qubit in the $z$- or $y$-basis.
We assume that these bit flips can be applied to an individual ion within the ion crystal, which is accomplished through tightly focused control beams.
Such rotations can be performed sequentially or in parallel on many ions since they do not change the motional state of the crystal.

\section{Problem Statement}\label{sec:problem}
Arbitrary coupling graphs can be composed from the hardware-native $\HIsing$ global interaction and single-qubit bit flips introduced in \refsec{sec:background}.
Following a physical motivation, we give a precise mathematical description of this construction problem which does not require any background physical knowledge.

\subsection{Physical Description}
Motivated by solving Max-Cut with QAOA \cite{farhi_quantum_2014}, we want to find a set of physical operations which efficiently implements the cost Hamiltonian for a particular target graph.
The cost Hamiltonian is exactly the coupling operation defined in \refeq{eq:coupling} apart from overall constants.
This is similar to any quantum operation compilation problem, where a desired unitary is constructed from a limited set of physical operations. We can succinctly describe the physical problem:
\begin{problem} [Physical]
    Find an ``\rep{pulse}{operation} sequence'' composed by interleaving ${\HIsing}$ interactions and single-qubit bit flips $X_j$, that produces a desired coupling operation. To mitigate operation error, a short sequence is desired; we will optimize for minimum number of $\HIsing$ applications or the minimum total strength of $\HIsing$ operations.
\end{problem}

The intuition behind this limited set of controls is as follows. By surrounding a $ZZ$ term with bit flips, the sign of that term in $\HIsing$ can be reversed:
\begin{equation}
    X_j^\dagger \sigma_i^z \sigma_j^z X_j = -\sigma_i^z \sigma_j^z \,.
\end{equation}
However, $X_j^\dagger \HIsing X_j$ will change the sign of all $ZZ$ terms acting on ion $j$.
Repeated applications of $\HIsing$ surrounded by carefully chosen sequences of bit flips were previously used to isolate individual two-qubit interactions within molecules in NMR quantum computing \cite{Leung2000}.
\change{Similar controls were also proposed to generate four-body spin interactions \cite{muller_simulating_2011}.}

\tsout{The coupling for the $p$th application of $\HIsing$ with strength $w_p$ is:}
\rep{and the $P_{p,i}$ are elements of a ``pulse matrix'' $P$ with values $\pm1$. The pulse matrix has a column for each qubit and a row for each application of $\HIsing$.}{We represent these bit flip operations with a $\{\pm 1\}^{k\times n}$ matrix $P$, where $k$ is the number of $\HIsing$ applications, and there is a column for each qubit.} The sign of each entry in a row encodes the bit flips surrounding that application of $\HIsing$\rep{. If there is a $-1$ in a given column the corresponding qubit is flipped, and for a $1$ it is not.}{, i.e., $P_{p, i}=-1$ if $i$th qubit is flipped in the $p$th application of $\HIsing$, and is equal to 1 otherwise. The coupling for the $p$th application of $\HIsing$ with strength $w_p$ can then be written as:}
\begin{align}
    \label{eq:pulse-coupling}
    C^{(p)} &= \sum_{i=1}^{n-1} \sum_{j=i+1}^n c_{p,i,j} \sigma_i^z \sigma_j^z \,,
    \\
    & \mbox{where } \, c_{p,i,j} = w_p \Jij P_{p,i} P_{p,j} \,. \nonumber
\end{align}
This can been \rep{seen}{verified} by considering the four possible bit flips for the $i,j$ term:
\begin{subequations}
\begin{align}
&w_p J_{i,j}\sigma_i^z \sigma_j^z,
\nonumber \\
P_{p,i} &= P_{p,j} = 1,\\
X_j^{\dagger} (w_p J_{i,j}\sigma_i^z \sigma_j^z) X_j = - &w_p J_{i,j}\sigma_i^z \sigma_j^z,
\nonumber \\
P_{p,i} &= 1, P_{p,j} = -1,\\
X_i^{\dagger} (w_p J_{i,j}\sigma_i^z \sigma_j^z) X_i = -&w_p J_{i,j}\sigma_i^z \sigma_j^z,
\nonumber \\
P_{p,i} &= -1, P_{p,j} = 1,\\
X_j^{\dagger} X_i^{\dagger} (w_p J_{i,j}\sigma_i^z \sigma_j^z) X_i X_j =& w_p J_{i,j}\sigma_i^z \sigma_j^z,
\nonumber \\
P_{p,i} &= P_{p,j} = -1 \,.
\end{align}
\end{subequations}
The net coupling of a series of $k$ applications of $\HIsing$ is given by the sum of their couplings:
\begin{equation}
    C = \sum_{p=1}^k C^{(p)} 
    = \sum_{i=1}^{n-1} \sum_{j=i+1}^n \left(\sum_{p=1}^k c_{p,i,j}\right) \sigma_i^z \sigma_j^z \,.
\end{equation}

By comparison with \refeq{eq:coupling}, we identify that the term in parenthesis is an element of the total coupling adjacency matrix, which can therefore be expressed as
\begin{equation}
    A = P^T W P \odot J \,,
\end{equation}
where $W$ is a diagonal $k\times k$ matrix of \rep{pulse}{Ising interaction} strengths with $W_{p,p} = w_p$, $\odot$ represents element wise matrix multiplication, and \tsout{the pulse matrix} $P \in \{ 1, -1\}^{k\times n}$.

Both positive and negative strengths $w_p$ are possible. For $\Jij = +|J_0|$, we can choose the strength of $\HIsing$ by changing the intensity of the lasers creating the MS interaction; however, this only permits $w_p > 0$. To achieve an effective $w_p < 0$, the MS interaction detuning can be negated for $\Jij = -|J_0|$ through \refeq{eq:constantJ}.
In \refapp{app:example}, we give an example of an \rep{pulse}{operation} sequence which creates a particular three-qubit coupling graph.

An optimal \rep{pulse}{operation} sequence is one which produces the desired coupling graph for the target problem with the least total error. It is expected that the infidelity of $\HIsing$ will be much larger than for the $X_j$ bit flips.
We propose two optimization criteria: (i) minimizing the total number of $\HIsing$ applications, and (ii) minimizing the total absolute strength of the $\HIsing$ \rep{pulses}{operations}. The first accounts for errors which are fixed for every entangling operation, while the second criterion will account for strength-dependent effects like residual spin-motion entanglement.
In the mathematical description that follows, these two criteria are expressed as the $L_0$ and $L_1$ norms of $W$, respectively.

\subsection{Mathematical Description}
We present in this section a mathematical statement of the \rep{graph compilation}{target coupling graph construction} problem, as motivated \rep{from the experimental setup}{by the preceding physical description}:
\begin{problem}[Mathematical]
    Given the target adjacency matrix $A \in \mathbb{R}^{n \times n}$ of graph $G$ and an intrinsic (hardware-dependent) adjacency matrix $J \in \mathbb{R}^{n \times n}$ for $n \in \mathbb{Z}_+$, find $P \in \{1,-1\}^{k \times n}$ and 
    a diagonal strength matrix $W \in \mathbb{R}^{k \times k}$ such that 
    \begin{equation} \label{eq:MathStatement}
        A = P^T W P \odot J
    \end{equation}
    where $\odot$ is the element-wise multiplication operation. We \rep{want to minimize either}{are interested in minimizing two objectives:} $\norm{W}_{0}$ \rep{or}{and} $\norm{W}_{1}$, which denote the $L_0$ and $L_1$ norms of the diagonal of $W$, respectively. \rep{That is}{In other words}, $\norm{W}_0$ is the number of non-zero entries along the diagonal, and $\norm{W}_1$ is the sum of the absolute values of the diagonal entries.
\end{problem}

Motivated by the discussion leading to \refeq{eq:constantJ}, for the rest of the paper we restrict ourselves to $J$ corresponding to the unweighted complete graph $K_n$. Let the strength matrix $W$ take arbitrary values, and fix $J$ to be $J_{i,j} = 1$ for $i \neq j$, and $J_{i,i}=0$ for all $i \in \{1, \hdots, n\}$.

\rep{In particular we are interested in}{We will first consider} minimizing the number of \rep{pulses}{Ising operations}, i.e., the $L_0$ norm of the strength matrix $W$, and define a new \rep{coupling number of the graph accordingly}{combinatorial quantity called the {\it coupling number} of a graph}:

\begin{definition}(Graph Coupling number)
We denote the minimum number of \rep{pulses}{Ising operations} needed to construct a graph, i.e., the $L_0$ norm $\|W\|_0$, as the graph coupling number, denoted by $\gc(G)$.
\end{definition}

\rep{We are not aware of any previous combinatorial optimization or algorithmic formulation for this problem, however it has interesting connections to the biclique partition number. Given a graph $G$, a biclique in the graph is a complete bipartite subgraph, and a clique is a complete subgraph. In \refsec{sec:solution}, we give an approximation algorithm for this problem that relies on partitioning the graph edges into bicliques. The size of a minimum such partition is called the \emph{biclique partition number}, defined as follows:}{This problem has not been studied in combinatorial optimization (to the best of our knowledge). We first observe that any $\{\pm 1\}$ operation sequence creates a sign pattern over a complete graph: $-1$ edges corresponding to a complete bipartite graph, and $+1$ edges correspond to complete graphs within each bipartition. Therefore, the notion of a graph coupling number has an interesting connection to the following \emph{biclique partition number}}:

\begin{definition}\label{biclique-partition-number}
Given an undirected graph $G$, the \emph{biclique partition number} of graph $G$ is defined as the minimum number of edge-disjoint bicliques (i.e., complete bipartite graphs) of $G$ whose union includes all of the edges of $G$. We denote this number by $\bp(G)$.
\end{definition}

The problem of determining the biclique partition number of a graph was introduced by Graham and Pollak in 1971 \cite{graham_pollak_loop_switching} and is known to be NP-hard \cite{orlin_1977}. We will show that one can construct any biclique using a constant number of \rep{pulses}{Ising operations} (irrespective of the size of the graph), which can be used to show $\gc(G) \leq 3\bp(G)+1$ for any unweighted graph $G$. However, this does not immediately imply NP-hardness for the graph coupling problem -- indeed, $\bp(K_n) = n - 1$ \cite{graham_pollak_loop_switching} whereas $\gc(K_n) = 1$. \rep{In this work, we}{We next} lay down some initial observations and approximation bounds for the graph coupling number \change{(using a partitioning of the graph edges into bicliques, in \refsec{sec:solution}).} 
We verify experimentally that our worst-case approximation bounds are order-optimal for small graphs (modulo the choice of a parameter $M$, defined in \refapp{app:MIP}), and therefore, these methods can be of immediate use to quantum physicists.

\section{Solution Methods}\label{sec:solution}
In this section, we give combinatorial methods for constructing \rep{pulse}{operation} sequences with small $L_0$ norm. Our proposed construction of unweighted graphs has a linear upper bound on the $L_0$ norm, i.e., $\norm{W}_0 = O(n)$, and our construction for weighted graphs requires $\norm{W}_0 = O(m)$ number of \rep{pulses}{Ising operations}. We also show that any graph with distinct edge weights needs at least $\Omega(\log n)$ \rep{pulses}{Ising operations}, i.e., the maximum graph coupling number on all graphs with $n$ vertices is $\gc(G) = \Omega(\log n)$. Finally, we describe a simple mixed-integer program to find optimal solutions on small graphs.

Prior work by \citet{Leung2000} outlined a deterministic method to couple any desired pair of qubits within the molecules used for NMR quantum computing.
NMR qubits are subject to a global, pairwise $ZZ$-coupling, like $\HIsing$, which is always on. They proposed a construction to decouple all qubits and selectively recouple any single pair of qubits in $O(n)$ \rep{$\HIsing$ pulses}{Ising operations} interleaved with single-qubit bit flips, which can be mapped to the problem formulation in \refeq{eq:MathStatement} with strictly non-negative $W$. Therefore, using this construction for all $m = |E|$ edges in $E$, a target coupling graph can be produced in $O(n m)$ \rep{$\HIsing$ pulses}{Ising operations}.
In our trapped ion system, however, $\HIsing$ strengths can also be negative, and therefore, we are able to simplify the construction as well as reduce the upper bound on the number of \rep{pulses}{operations} to $O(m)$ for weighted graphs and $O(n)$ for unweighted graphs.

\subsection{Union-of-Stars \change{Construction}}\label{sec:uos}
In this section, we give general constructions for weighted and unweighted graphs. We will construct a weighted graph in $O(m) = O(n^2)$ \rep{pulses}{Ising operations} by constructing each edge in a constant number of steps and composing those constructions. For unweighted graphs, we will construct a star subgraph (wherein \change{a} single node is adjacent to a set of non-adjacent nodes) in a constant number of steps and compose these constructions to build the graph. Since the edge set of an unweighted graph is the union of at most $n - 1$ edge-disjoint stars, this results in an $O(n)$ construction for unweighted graphs. \change{Two example union-of-stars constructions for the same graph are given in \reffig{fig:union-of-stars-example-1} and \reffig{fig:union-of-stars-example-2}.}

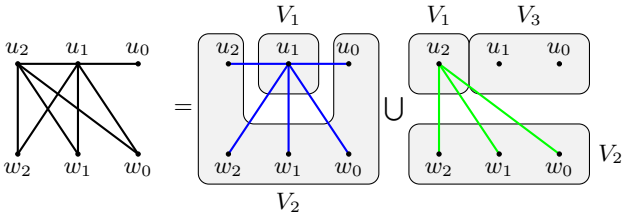
\begin{figure}
\begin{tikzpicture}[scale=0.40]
    % Original graph
    \draw[black, thick] (-1,2) -- (3,2);   %u2 - u1 - u0
    \draw[black, thick] (-1,2) -- (-1,-1); %u2 - v2
    \draw[black, thick] (-1,2) -- (1,-1);  %u2 - v1
    \draw[black, thick] (-1,2) -- (3,-1);  %u2 - v0
    \draw[black, thick] (1,2) -- (-1,-1);  %u1 - v2
    \draw[black, thick] (1,2) -- (1,-1);   %u1 - v1
    \draw[black, thick] (1,2) -- (3,-1);   %u1 - v0
    
    \filldraw[black] (-1,2) circle (2pt) node[anchor=south] (u2_0) {$u_2$};
    \filldraw[black] (1,2) circle (2pt) node[anchor=south] (u1_0) {$u_1$};
    \filldraw[black] (3,2) circle (2pt) node[anchor=south] (u0_0) {$u_0$};
    \filldraw[black] (-1,-1) circle (2pt) node[anchor=north] (w2_0) {$w_2$};
    \filldraw[black] (1,-1) circle (2pt) node[anchor=north] (w1_0) {$w_1$};
    \filldraw[black] (3,-1) circle (2pt) node[anchor=north] (w0_0) {$w_0$};

    \node[align=center] at (4.5,0.5) {$=$};

    % First part
    \draw[fill=gray!10][rounded corners] (7,3)--(9,3)--(9,1)--(7,1)--cycle;
    \filldraw[black] (8,3) circle (0pt) node[anchor=south]{$V_1$};
    \draw[fill=gray!10][rounded corners] (11,3)--(11,-2)--(5,-2)--(5,3)--(6.5,3)--(6.5,0)--(9.5,0)--(9.5,3)--cycle;
    \filldraw[black] (8,-2) circle (0pt) node[anchor=north]{$V_2$};

    \draw[blue, thick] (6,2) -- (10,2);   %u2 - u1 - u0
    \draw[blue, thick] (8,2) -- (6,-1);  %u1 - v2
    \draw[blue, thick] (8,2) -- (8,-1);   %u1 - v1
    \draw[blue, thick] (8,2) -- (10,-1);   %u1 - v0
    
    \filldraw[black] (6,2) circle (2pt) node[anchor=south]{$u_2$};
    \filldraw[black] (8,2) circle (2pt) node[anchor=south]{$u_1$};
    \filldraw[black] (10,2) circle (2pt) node[anchor=south]{$u_0$};
    \filldraw[black] (6,-1) circle (2pt) node[anchor=north]{$w_2$};
    \filldraw[black] (8,-1) circle (2pt) node[anchor=north]{$w_1$};
    \filldraw[black] (10,-1) circle (2pt) node[anchor=north]{$w_0$};
    %\node[align=center] at (8,-4) {$\scriptstyle V_1 = \{u_1\}$ \\ $\scriptstyle V_2 = \{u_0, u_2, w_0,w_1, w_2\}$};
    
    \node[align=center] at (11.5,0.5) {$\bigcup$};
    
    % Second part
    \draw[fill=gray!10][rounded corners] (12,3)--(14,3)--(14,1)--(12,1)--cycle;
    \filldraw[black] (13,3) circle (0pt) node[anchor=south]{$V_1$};
    \draw[fill=gray!10][rounded corners] (12,0)--(18,0)--(18,-2)--(12,-2)--cycle;
    \filldraw[black] (18,-1) circle (0pt) node[anchor=west]{$V_2$};
    \draw[fill=gray!10][rounded corners] (14,3)--(18,3)--(18,1)--(14,1)--cycle;
    \filldraw[black] (16,3) circle (0pt) node[anchor=south]{$V_3$};
    
    \draw[green, thick] (13,2) -- (13,-1); %u2 - v2
    \draw[green, thick] (13,2) -- (15,-1);  %u2 - v1
    \draw[green, thick] (13,2) -- (17,-1);  %u2 - v0
    
    \filldraw[black] (13,2) circle (2pt) node[anchor=south]{$u_2$};
    \filldraw[black] (15,2) circle (2pt) node[anchor=south]{$u_1$};
    \filldraw[black] (17,2) circle (2pt) node[anchor=south]{$u_0$};
    \filldraw[black] (13,-1) circle (2pt) node[anchor=north]{$w_2$};
    \filldraw[black] (15,-1) circle (2pt) node[anchor=north]{$w_1$};
    \filldraw[black] (17,-1) circle (2pt) node[anchor=north]{$w_0$};
    %\node[align=center] at (15,-4) {$\scriptstyle V_1 = \{u_2\}$ \\ $\scriptstyle V_2 = \{w_0,w_1, w_2\}$ \\ $\scriptstyle V_3 = \{u_0, u_1\}$};
\end{tikzpicture}

\caption{\change{An example construction for the union-of-stars algorithm for an unweighted graph (Theorem \ref{unweighted}). The original graph (left) can be decomposed into two star graphs, represented in blue and green, respectively. Since each star is a complete bipartite graph, it can be constructed using Lemma \ref{complete-bipartite-construction}, with the sets $V_1, V_2, V_3$ specified for each of the stars. The construction of each star takes $4$ Ising operations, but since one operation is common across the stars, we end up with a total of $7$ operations for this construction by combining the common one using Lemma \ref{redundant-rows}.}}
\label{fig:union-of-stars-example-1}
\end{figure}

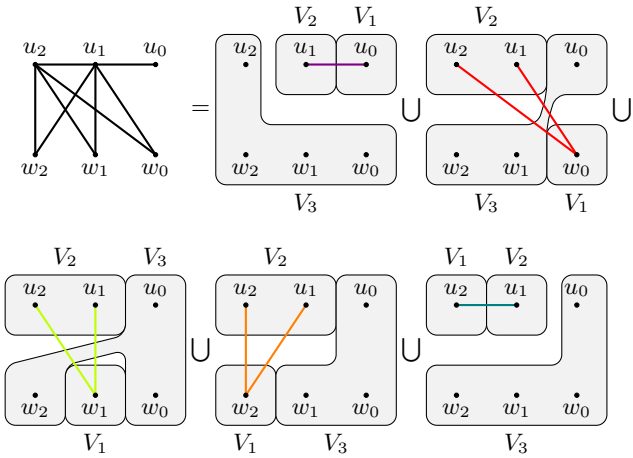
\begin{figure}
\begin{tikzpicture}[scale=0.40]
    % Original graph
    \draw[black, thick] (-1,2) -- (3,2);   %u2 - u1 - u0
    \draw[black, thick] (-1,2) -- (-1,-1); %u2 - v2
    \draw[black, thick] (-1,2) -- (1,-1);  %u2 - v1
    \draw[black, thick] (-1,2) -- (3,-1);  %u2 - v0
    \draw[black, thick] (1,2) -- (-1,-1);  %u1 - v2
    \draw[black, thick] (1,2) -- (1,-1);   %u1 - v1
    \draw[black, thick] (1,2) -- (3,-1);   %u1 - v0
    
    \filldraw[black] (-1,2) circle (2pt) node[anchor=south]{$u_2$};
    \filldraw[black] (1,2) circle (2pt) node[anchor=south]{$u_1$};
    \filldraw[black] (3,2) circle (2pt) node[anchor=south]{$u_0$};
    \filldraw[black] (-1,-1) circle (2pt) node[anchor=north]{$w_2$};
    \filldraw[black] (1,-1) circle (2pt) node[anchor=north]{$w_1$};
    \filldraw[black] (3,-1) circle (2pt) node[anchor=north]{$w_0$};

    \node[align=center] at (4.5,0.5) {$=$};

    % First part
    \draw[fill=gray!10][rounded corners] (9,3)--(11,3)--(11,1)--(9,1)--cycle;
    \filldraw[black] (10,3) circle (0pt) node[anchor=south]{$V_1$};
    \draw[fill=gray!10][rounded corners] (11,-2)--(5,-2)--(5,3)--(6.5,3)--(6.5,0)--(11,0)--cycle;
    \filldraw[black] (8,-2) circle (0pt) node[anchor=north]{$V_3$};
    \draw[fill=gray!10][rounded corners] (7,3)--(9,3)--(9,1)--(7,1)--cycle;
    \filldraw[black] (8,3) circle (0pt) node[anchor=south]{$V_2$};
    
    \draw[violet, thick] (8,2) -- (10,2);  \filldraw[black] (6,2) circle (2pt) node[anchor=south]{$u_2$};
    \filldraw[black] (8,2) circle (2pt) node[anchor=south]{$u_1$};
    \filldraw[black] (10,2) circle (2pt) node[anchor=south]{$u_0$};
    \filldraw[black] (6,-1) circle (2pt) node[anchor=north]{$w_2$};
    \filldraw[black] (8,-1) circle (2pt) node[anchor=north]{$w_1$};
    \filldraw[black] (10,-1) circle (2pt) node[anchor=north]{$w_0$};
    %\node[align=center] at (8,-4.5) {$\scriptstyle V_1 = \{u_0\}$ \\ $\scriptstyle V_2 = \{u_1\}$ \\ $\scriptstyle V_3 = \{u_2, w_0, w_1, w_2\}$};
    
    \node[align=center] at (11.5,0.5) {$\bigcup$};
    
    % Second part
    \draw[fill=gray!10][rounded corners] (18,-2)--(16,-2)--(16,0)--(18,0)--cycle;
    \filldraw[black] (17,-2) circle (0pt) node[anchor=north]{$V_1$};
    \draw[fill=gray!10][rounded corners] (12,3)--(12,1)--(16,1)--(16,3)--cycle;
    \filldraw[black] (14,3) circle (0pt) node[anchor=south]{$V_2$};
    \draw[fill=gray!10][rounded corners] (12,0)--(12,-2)--(16,-2)--(16,0)--(16.3,1)--(18,1)--(18,3)--(16,3)--(16,1)--(15.7,0)--cycle;
    \filldraw[black] (14,-2) circle (0pt) node[anchor=north]{$V_3$};
    
    \draw[red, thick] (13,2) -- (17,-1);  %u2 - v0
    \draw[red, thick] (15,2) -- (17,-1);   %u1 - v0
    
    \filldraw[black] (13,2) circle (2pt) node[anchor=south]{$u_2$};
    \filldraw[black] (15,2) circle (2pt) node[anchor=south]{$u_1$};
    \filldraw[black] (17,2) circle (2pt) node[anchor=south]{$u_0$};
    \filldraw[black] (13,-1) circle (2pt) node[anchor=north]{$w_2$};
    \filldraw[black] (15,-1) circle (2pt) node[anchor=north]{$w_1$};
    \filldraw[black] (17,-1) circle (2pt) node[anchor=north]{$w_0$};
    %\node[align=center] at (15,-4.5) {$\scriptstyle V_1 = \{w_0\}$ \\ $\scriptstyle V_2 = \{u_1, u_2\}$ \\ $\scriptstyle V_3 = \{u_0, w_1, w_2\}$};
    
    \node[align=center] at (18.5,0.5) {$\bigcup$};
    
    % Third part
    \draw[fill=gray!10][rounded corners] (0,-8)--(0,-10)--(2,-10)--(2,-8)--cycle;
    \filldraw[black] (1,-10) circle (0pt) node[anchor=north]{$V_1$};
    \draw[fill=gray!10][rounded corners] (-2,-5)--(-2,-7)--(2,-7)--(2,-5)--cycle;
    \filldraw[black] (0,-5) circle (0pt) node[anchor=south]{$V_2$};
    \draw[fill=gray!10][rounded corners]
    (2,-7.5)--(2,-10)--(4,-10)--(4,-5)--(2,-5)--(2,-7)--(-2,-8)--(-2,-10)--(0,-10)--(0,-8)--cycle;
    \filldraw[black] (3,-5) circle (0pt) node[anchor=south]{$V_3$};
    
    \draw[lime, thick] (-1,-6) -- (1,-9);  %u2 - v1
    \draw[lime, thick] (1,-6) -- (1,-9);   %u1 - v1
    
   \filldraw[black] (-1,-6) circle (2pt) node[anchor=south]{$u_2$};
    \filldraw[black] (1,-6) circle (2pt) node[anchor=south]{$u_1$};
    \filldraw[black] (3,-6) circle (2pt) node[anchor=south]{$u_0$};
    \filldraw[black] (-1,-9) circle (2pt) node[anchor=north]{$w_2$};
    \filldraw[black] (1,-9) circle (2pt) node[anchor=north]{$w_1$};
    \filldraw[black] (3,-9) circle (2pt) node[anchor=north]{$w_0$};

    %\node[align=center] at (1,-14.5) {$\scriptstyle V_1 = \{w_1\}$ \\ $\scriptstyle V_2 = \{u_1, u_2\}$ \\ $\scriptstyle V_3 = \{u_0, w_0, w_2\}$};
    
    \node[align=center] at (4.5,-7.5) {$\bigcup$};

    % Fourth part
    \draw[fill=gray!10][rounded corners] (5,-8)--(5,-10)--(7,-10)--(7,-8)--cycle;
    \filldraw[black] (6,-10) circle (0pt) node[anchor=north]{$V_1$};
    \draw[fill=gray!10][rounded corners] (5,-5)--(5,-7)--(9,-7)--(9,-5)--cycle;
    \filldraw[black] (7,-5) circle (0pt) node[anchor=south]{$V_2$};
    \draw[fill=gray!10][rounded corners] (9,-5)--(9,-8)--(7,-8)--(7,-10)--(11,-10)--(11,-5)--cycle;
    \filldraw[black] (9,-10) circle (0pt) node[anchor=north]{$V_3$};
    
    \draw[orange, thick] (6,-6) -- (6,-9); %u2 - v2
    \draw[orange, thick] (8,-6) -- (6,-9);  %u1 - v2
    
    \filldraw[black] (6,-6) circle (2pt) node[anchor=south]{$u_2$};
    \filldraw[black] (8,-6) circle (2pt) node[anchor=south]{$u_1$};
    \filldraw[black] (10,-6) circle (2pt) node[anchor=south]{$u_0$};
    \filldraw[black] (6,-9) circle (2pt) node[anchor=north]{$w_2$};
    \filldraw[black] (8,-9) circle (2pt) node[anchor=north]{$w_1$};
    \filldraw[black] (10,-9) circle (2pt) node[anchor=north]{$w_0$};
    %\node[align=center] at (8,-14.5) {$\scriptstyle V_1 = \{w_2\}$ \\ $\scriptstyle V_2 = \{u_1\}$ \\ $\scriptstyle V_3 = \{u_2, w_0, w_1\}$};
    
    \node[align=center] at (11.5,-7.5) {$\bigcup$};
    
    % Fifth part
    \draw[fill=gray!10][rounded corners] (12,-5)--(12,-7)--(14,-7)--(14,-5)--cycle;
    \filldraw[black] (13,-5) circle (0pt) node[anchor=south]{$V_1$};
    \draw[fill=gray!10][rounded corners] (16,-5)--(16,-7)--(14,-7)--(14,-5)--cycle;
    \filldraw[black] (15,-5) circle (0pt) node[anchor=south]{$V_2$};
    \draw[fill=gray!10][rounded corners] (16.5,-5)--(18,-5)--(18,-10)--(12,-10)--(12,-8)--(16.5,-8)--cycle;
    \filldraw[black] (15,-10) circle (0pt) node[anchor=north]{$V_3$};
    
    \draw[teal, thick] (13,-6) -- (15,-6);  %u2 - v1
    
    \filldraw[black] (13,-6) circle (2pt) node[anchor=south]{$u_2$};
    \filldraw[black] (15,-6) circle (2pt) node[anchor=south]{$u_1$};
    \filldraw[black] (17,-6) circle (2pt) node[anchor=south]{$u_0$};
    \filldraw[black] (13,-9) circle (2pt) node[anchor=north]{$w_2$};
    \filldraw[black] (15,-9) circle (2pt) node[anchor=north]{$w_1$};
    \filldraw[black] (17,-9) circle (2pt) node[anchor=north]{$w_0$};
    %\node[align=center] at (15,-14.5) {$\scriptstyle V_1 = \{u_2\}$ \\ $\scriptstyle V_2 = \{u_1\}$ \\ $\scriptstyle V_3 = \{u_0, w_0, w_1, w_2\}$};
\end{tikzpicture}

\caption{\change{A different star decomposition for the example in Figure \ref{fig:union-of-stars-example-1}. In Theorem \ref{unweighted}, choosing a different ordering for the vertices can result in a decomposition into a larger number of stars and therefore in a larger number of Ising operations. The ordering for vertices in this case is $u_0, w_0, w_1, w_2, u_2$, while the ordering for Figure \ref{fig:union-of-stars-example-1} is $u_1, u_2$. Notice that the maximum degree for any star is $2$ in this decomposition, which could possibly be useful under different error assumptions.}}
\label{fig:union-of-stars-example-2}
\end{figure}

We denote a graph $G$ as $G = (V, E, z)$, where $z \in \mathbb{R}^{|E|}$ is the weight function on edges $E$. We first claim that \rep{pulse}{operation} sequences for constructing two different edge weights on the same graph can be combined. In particular, this will imply that we can simply augment the \tsout{pulse} sequences for disjoint subgraphs (with weight 0 on non-edges) to construct the \change{target coupling} graph.

\begin{lemma} \label{composition-weighted}
    For weighted graphs $G_1 = (V, E, z_1)$ and $G_2 = (V,  E, z_2)$ with vertex set $V$ and edge set $E$ with weights $z_1, z_2 \in \change{\mathbb{R}^{|E|}}$ respectively, if $G = (V, E, z_1 + z_2)$, then $\gc(G) \le \gc(G_1) + \gc(G_2)$.
\end{lemma}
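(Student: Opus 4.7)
The plan is to show the inequality by explicit construction: I will concatenate optimal operation sequences for $G_1$ and $G_2$ into a single sequence that realizes $G$.

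First I would invoke the definition of the graph coupling number to obtain, for $i=1,2$, a pulse matrix $P_i \in \{-1,+1\}^{k_i \times n}$ and diagonal strength matrix $W_i \in \mathbb{R}^{k_i \times k_i}$ with $k_i = \gc(G_i)$ such that the adjacency matrix of $G_i$ satisfies $A_i = P_i^T W_i P_i \odot J$, where $J$ is the all-ones (off-diagonal) matrix of $K_n$. Then I would define the concatenated sequence
\begin{equation*}
P = \begin{pmatrix} P_1 \\ P_2 \end{pmatrix} \in \{-1,+1\}^{(k_1+k_2) \times n}, \qquad W = \begin{pmatrix} W_1 & 0 \\ 0 & W_2 \end{pmatrix},
\end{equation*}
which is a valid operation sequence of length $k_1 + k_2$.

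Next I would verify that this sequence produces $G$. By block multiplication,
\begin{equation*}
P^T W P = P_1^T W_1 P_1 + P_2^T W_2 P_2,
\end{equation*}
and since $\odot J$ distributes over addition, element-wise multiplication yields
\begin{equation*}
P^T W P \odot J = (P_1^T W_1 P_1 \odot J) + (P_2^T W_2 P_2 \odot J) = A_1 + A_2,
\end{equation*}
which is exactly the adjacency matrix of $G = (V, E, z_1 + z_2)$ since the weight function adds componentwise. Hence $(P, W)$ is a feasible construction for $G$ using $k_1 + k_2 = \gc(G_1) + \gc(G_2)$ Ising operations, so $\gc(G) \le \gc(G_1) + \gc(G_2)$.

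There is no substantive obstacle here; the argument is essentially linearity of the map $(P, W) \mapsto P^T W P \odot J$ combined with the ability to stack pulse matrices. The only subtlety worth flagging is that the nonzero entries along the diagonal of $W$ add (rather than cancel) across the two blocks, so the $L_0$ bound is straightforwardly additive. The same argument would also yield $\|W\|_1 \le \|W_1\|_1 + \|W_2\|_1$, giving an analogous statement for the $L_1$ objective if desired.
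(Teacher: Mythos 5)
Your proof is correct and follows essentially the same route as the paper's: stack the two pulse matrices, take the block-diagonal strength matrix, and use linearity of $(P,W)\mapsto P^TWP\odot J$ to conclude the concatenated sequence realizes $A_1+A_2$. The additional remark about the $L_1$ norm is a valid bonus observation not made in the paper's proof.
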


\begin{proof}
    Let $A_1, A_2$ be the adjacency matrices of $G_1, G_2$ respectively (i.e., $A_{i,j} = z_{(i,j)}$ if $(i,j)\in E$ and 0 otherwise). Then, the adjacency matrix of $G$ is $A_1 + A_2$.
     
    \medskip
    Let $A_1 = \big( P_1^TW_1P_1 \big) \odot J$ and $A_2 = \big( P_2^TW_2P_2 \big) \odot J$, where $P_1$ is a $k_1 \times n$ matrix and $P_2$ is a $k_2 \times n$ matrix for some $k_1, k_2$. We construct the matrix $P \in \{-1, 1\}^{(k_1 + k_2) \times n}$ by augmenting $P_1$ and $P_2$ as follows: for all $j \in [1, n]$,
    \[
        P_{i, j} = \begin{cases}
            (P_1)_{i, j} \;\; \text{if} \; i \le k_1 \\
            (P_2)_{i - k_1, j} \;\; \text{if} \; k_1 < i \le k_1 + k_2.
        \end{cases}
    \]
    That is, the first $k_1$ rows are the matrix $P_1$, and the next $k_2$ rows are the matrix $P_2$. We define a diagonal square matrix $W$ of size $k_1 + k_2$ as follows:
    \[
        W_{i, i} = \begin{cases}
            (W_1)_{i, i} \;\; \text{if} \; i \le k_1, \\
            (W_2)_{i - k_1, i - k_1} \;\; \text{if} \; k_1 < i \le k_1 + k_2. 
        \end{cases}
    \]
    That is, the diagonal entries in the first $k_1$ rows in $W$ are the diagonal entries in $W_1$, and those in the next $k_2$ rows are the diagonal entries in $W_2$. From a simple calculation,
    \[
        P^TWP \odot J = P_1^T W_1 P_1 \odot J + P_2^T W_2 P_2 \odot J = A_1 + A_2
    \]
    
    Choose $k_1 = \gc(G_1)$ and $k_2 = \gc(G_2)$, so that $\gc(G) \le k = \gc(G_1) + \gc(G_2)$. This proves our claim.
\end{proof}

\begin{corollary} \label{composition-unweighted}
    For unweighted graphs $G_1 = (V, E_1)$ and $G_2 = (V, E_2)$ where $E_1 \cap E_2 = \emptyset$, if $G = (V, E_1 \cup E_2)$, then $\gc(G) \le \gc(G_1) + \gc(G_2)$.
\end{corollary}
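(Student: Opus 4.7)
The plan is to reduce the corollary directly to Lemma \ref{composition-weighted} by lifting the two unweighted graphs to weighted graphs sharing a common edge set. Concretely, let $E = E_1 \cup E_2$, and define $z_1 \in \mathbb{R}^{|E|}$ by $z_1(e)=1$ if $e \in E_1$ and $z_1(e)=0$ otherwise, and analogously define $z_2$ so that $z_2(e)=1$ iff $e \in E_2$. Set $G_1' = (V,E,z_1)$ and $G_2' = (V,E,z_2)$. Because $E_1$ and $E_2$ are disjoint, the weight function $z_1 + z_2$ equals $1$ on every edge of $E = E_1 \cup E_2$, so $(V, E, z_1 + z_2)$ is precisely the unweighted graph $G$.

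The key observation is that attaching weight-$0$ edges does not change the graph coupling number, since a zero entry in the adjacency matrix is indistinguishable from a missing edge in the defining equation $A = P^T W P \odot J$. Hence $\gc(G_1') = \gc(G_1)$ and $\gc(G_2') = \gc(G_2)$. Applying Lemma \ref{composition-weighted} to $G_1'$ and $G_2'$ yields
\[
\gc(G) \;=\; \gc(V,E,z_1+z_2) \;\le\; \gc(G_1') + \gc(G_2') \;=\; \gc(G_1) + \gc(G_2),
\]
which is the desired bound.

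There is no substantive obstacle here: the work done in Lemma \ref{composition-weighted} (augmenting the pulse matrices $P_1, P_2$ by row-stacking and concatenating the strength matrices $W_1, W_2$) already handles the construction, and the only thing to verify is the innocuous claim that padding an edge set with weight-$0$ entries preserves $\gc$. The mild point to be careful about is ensuring the underlying vertex set and the ambient dimension $n$ are the same for both graphs, which is given by the hypothesis that $G_1$ and $G_2$ both have vertex set $V$.
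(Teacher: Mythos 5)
Your proposal is correct and matches the paper's intent exactly: the paper gives no separate proof of Corollary~\ref{composition-unweighted}, treating it as an immediate consequence of Lemma~\ref{composition-weighted}, and your reduction --- encoding each unweighted graph as a $\{0,1\}$-weighted graph on the common edge set $E_1 \cup E_2$ and noting that weight-$0$ edges are indistinguishable from non-edges in $A = P^T W P \odot J$ --- is precisely the intended bridging step. Nothing further is needed.
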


In many cases, one can do better than adding graph coupling numbers together by removing duplicate rows in \tsout{the pulse matrix} $P$. For a matrix $P$ and a row vector $r$ of $P$, let $P \setminus r$ denote the matrix $P$ with row $r$ removed.

\begin{lemma}\label{redundant-rows}
    Suppose matrices $P, W$ satisfy $\big(P^TWP\big) \odot J = A$, where $A$ is the adjacency matrix of some graph $G$. If there are rows $r, s$ in $P$ such that $r = \pm s$, then there exists matrix $W_1$ such that $\big((P \setminus s)^T W_1 (P\setminus s)\big) \odot J = A$.
\end{lemma}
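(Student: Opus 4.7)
The plan is to expand the quadratic form row-by-row and show that two rows which are negatives (or copies) of each other contribute the same rank-one outer product, so they can be merged by adding their weights.

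First, I would write the product $P^T W P$ as a sum of rank-one pieces, one per row of $P$. If $P_p \in \{\pm 1\}^{1 \times n}$ denotes the $p$th row of $P$ viewed as a row vector, then
\begin{equation*}
    P^T W P = \sum_{p=1}^k w_p\, P_p^T P_p,
\end{equation*}
since $W$ is diagonal. This expansion is the only structural fact needed; everything else is an observation about the summands.

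Next, I would use the crucial sign-invariance: for any vector $v$, the outer product satisfies $(-v)^T(-v) = v^T v$. Hence if $r$ and $s$ are the rows of $P$ in question and $r = \pm s$, then $r^T r = s^T s$ as $n \times n$ matrices. Denoting by $w_r$ and $w_s$ the corresponding diagonal entries of $W$, the two terms in the sum combine as
\begin{equation*}
    w_r\, r^T r + w_s\, s^T s = (w_r + w_s)\, r^T r.
\end{equation*}

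Finally, I would construct $W_1$ explicitly from $W$ by deleting the diagonal entry corresponding to row $s$ and replacing the entry corresponding to row $r$ by $w_r + w_s$; all other diagonal entries are left unchanged. By the combination above, $(P \setminus s)^T W_1 (P \setminus s) = P^T W P$ as $n \times n$ matrices, and taking element-wise product with $J$ on both sides yields $\bigl((P \setminus s)^T W_1 (P \setminus s)\bigr) \odot J = A$, as required.

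I do not expect any genuine obstacle here: the statement is essentially an algebraic bookkeeping fact, and the only mildly non-obvious ingredient is the sign-invariance of the rank-one summand $v^T v$, which handles the $r = -s$ case in exactly the same way as the $r = s$ case.
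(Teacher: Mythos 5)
Your proof is correct and follows essentially the same route as the paper: expand $P^TWP$ as a sum of weighted rank-one outer products, one per row, and merge the two terms coming from rows $r$ and $s$ into a single diagonal entry of $W_1$. One point worth highlighting: your merged weight $w_r + w_s$ is the right one in both the $r = s$ and $r = -s$ cases, precisely because of the sign-invariance $(-v)^T(-v) = v^Tv$ that you isolate; the paper's proof instead writes $r = \eta s$ and prescribes the replacement $w_r + \eta w_s$, which for $\eta = -1$ introduces a spurious sign (negating a row leaves its outer product unchanged, so the weights should simply add regardless of $\eta$). The discrepancy is harmless where the lemma is invoked in the paper, since only identical all-ones rows are merged there (so $\eta = 1$), but your version is the one that is correct as stated for the $r = -s$ case.
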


\begin{proof}
    Let $r = \eta s$, where $\eta \in \{-1, 1\}$. Let $w_r, w_s$ be the diagonal entries in $W$ corresponding to rows $r, s$ respectively in $P$. To obtain matrix $W_1$ from $W$, delete the row and column containing $w_s$ and replace $w_r$ by $w_r + \eta w_s$. Then, it is easily seen that $\big((P \setminus s)^T W_1 (P \setminus s)\big) \odot J = \big(P^T W P\big) \odot J = A$.
\end{proof}

We next show that any biclique (i.e., complete bipartite graph) and a union of isolated vertices can be constructed using \rep{4 pulses}{four Ising operations}, which will give us an upper bound on the graph coupling number for arbitrary graphs. 

\begin{lemma}\label{complete-bipartite-construction}
    Given a weighted graph $G = (V, E, z)$, where $V$ is the disjoint union of $V_1, V_2, V_3$, (i.e., $V = V_1 \bigcup V_2 \bigcup V_3$ and $V_i \cap V_j = \emptyset$ for $(i\neq j)$), $E = \big\{(u, v) : u \in V_1, v \in V_2 \big\}$, and $z_e = \mu$ for all $e \in E$, (that is, $G$ is a complete bipartite graph except some isolated vertices, with all edge weights being equal), $\gc(G) \le 4$.
\end{lemma}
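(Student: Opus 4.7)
The plan is to exhibit an explicit $4 \times n$ matrix $P \in \{\pm 1\}^{4 \times n}$ and a $4 \times 4$ diagonal weight matrix $W$ that realize the given biclique-plus-isolated-vertex adjacency matrix via $A = P^T W P \odot J$. Since any single witness $(P, W)$ certifies $\gc(G) \le 4$, the proof reduces to picking a clever assignment and verifying it. Because $V_1, V_2, V_3$ partition $V$ and because the expression $P^T W P$ depends on each vertex only through its column of $P$, it suffices to assign each vertex one of three fixed sign patterns in $\{\pm 1\}^4$ determined by the part it lies in.

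Concretely, I would set column $i$ of $P$ to be $(+1, +1, +1, +1)^T$ if $i \in V_1$, $(-1, -1, +1, +1)^T$ if $i \in V_2$, and $(+1, -1, -1, +1)^T$ if $i \in V_3$. The $(i,j)$ entry of $P^T W P$ is then $\sum_{p=1}^4 w_p P_{p,i} P_{p,j}$, and its value depends only on which parts $i$ and $j$ belong to. After $\odot J$ zeros the diagonal, the off-diagonal conditions collapse to four linear equations in $w_1, w_2, w_3, w_4$: the within-part sums $V_s \times V_s$ must vanish (which gives $\sum_p w_p = 0$), the $V_1 \times V_2$ inner product must equal $\mu$, and the $V_1 \times V_3$ and $V_2 \times V_3$ inner products must vanish. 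Explicitly,
\begin{align*}
w_1 + w_2 + w_3 + w_4 &= 0, \\
-w_1 - w_2 + w_3 + w_4 &= \mu, \\
w_1 - w_2 - w_3 + w_4 &= 0, \\
-w_1 + w_2 - w_3 + w_4 &= 0.
\end{align*}
Solving (subtract/add pairs) yields $w_1 = w_2 = -\mu/4$ and $w_3 = w_4 = \mu/4$. Substituting back verifies all six pair-type conditions, completing the construction.

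The main subtlety, and the only nontrivial design choice, is picking the three sign patterns so that the linear system above is non-degenerate. Equivalently, the three pairwise Hadamard products of the chosen patterns, together with the all-ones vector (which enforces the within-part cancellation), must be linearly independent in $\mathbb{R}^4$. The patterns above were selected so that these four vectors form the rows of a full-rank $\pm 1$ matrix (a truncated Hadamard basis), which is what guarantees a unique solution for every $\mu \in \mathbb{R}$. Once this algebraic property is in place, the verification is just arithmetic, and no further analysis is needed.
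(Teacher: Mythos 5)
Your proposal is correct and takes essentially the same approach as the paper: both exhibit an explicit rank-one-per-part witness, assigning each vertex one of three fixed $\{\pm1\}^4$ sign patterns according to its part and choosing a diagonal $W$ with entries $\pm\mu/4$ so that $P^TWP\odot J = A$. Your specific sign patterns and the order of the weights differ from the paper's, and you additionally derive $W$ by solving the resulting $4\times 4$ linear system (with a nice remark on why the system is non-degenerate) rather than simply stating and verifying it, but the underlying construction is the same.
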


\begin{proof}
    We give matrices $P, W$ with $k = 4$ such that $P^TWP \odot J = A$. We let column $j$ of $P$ correspond to vertex $v_j \in V$. To define entry $P_{a, j}$ of the matrix $P$ for each $a \in \{1, 2, 3, 4\}$ and $j \in \{1, \ldots, n\}$, we define numbers $P_{a, V_1}, P_{a, V_2}, P_{a, V_3}$ and let
    \[
        P_{a, j} = \begin{cases}
            P_{a, V_1} & \text{if}\; v_j \in V_1, \\
            P_{a, V_2} & \text{if}\; v_j \in V_2, \\
            P_{a, V_3} & \text{if}\; v_j \in V_3.
        \end{cases}
    \]
    We now give matrix $W$ and specify entries $P_{a, V_l}$ for $l \in \{1, 2, 3\}, a \in \{1, 2, 3, 4\}$:
    \begin{align*}
        & \;\; \begin{matrix}
            V_1\; & V_2\; & V_3
        \end{matrix}
        \\
        P = &\begin{pmatrix}
        1 & 1 & -1\\
        1 & -1 & -1 \\
        1 & 1 & 1 \\
        1 & -1 & 1
        \end{pmatrix}, 
        & W = \begin{pmatrix}
        \frac{\mu}{4} & 0 & 0 & 0 \\
        0 & -\frac{\mu}{4} & 0 & 0 \\
        0 & 0 & \frac{\mu}{4} & 0 \\
        0 & 0 & 0 & -\frac{\mu}{4}
        \end{pmatrix}.
    \end{align*}
    Note that $\big(P^TWP \big)_{i, j} = \sum_{a = 1}^kP_{a, i}P_{a, j}W_{a, a}$. Suppose $v_i \in V_1, v_j \in V_2$. We show that the edge weight of $(v_i, v_j)$ in our construction is $\mu$, as expected: since $i \neq j$, 
    \begin{align*}
        \big(P^TWP \odot J\big)_{i, j} = \Big[(1)(1)\frac{\mu}{4} + (1)(-1)\frac{-\mu}{4} + (1)(1)\frac{\mu}{4} 
        \\
        + (1)(-1)\frac{-\mu}{4}\Big] \times 1 = \mu = A_{i, j}.
    \end{align*}
    
    A similar calculation shows that when $v_i \in V_1, v_j \in V_3$, $\big(P^TWP \odot J \big)_{i, j} = 0$ since there is no edge between $V_1, V_3$ and therefore $A_{i, j} = 0$. One can check this for every possibility of $v_i, v_j$, so that $P^TWP \odot J = A$. Therefore, $\gc(G) \le 4$.
\end{proof}

The above decomposition is crucial in constructing graphs  edge-by-edge while incorporating arbitrary edge weights, i.e., decompose the graph into a union of edges (wherein each edge is a trivial biclique), which gets us the following theorem.

\begin{theorem}\label{weighted}
   For any weighted graph $G = (V,E, z)$, $\gc(G) \le 3m+1 = O(m)$, where $m$ is the number of edges in $G$.
\end{theorem}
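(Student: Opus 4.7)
The plan is to construct the graph edge-by-edge, treating each edge as a trivial biclique, and then combine the constructions while collapsing a common redundant row.

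First, for each edge $e = (u,v) \in E$ with weight $z_e$, I will apply Lemma~\ref{complete-bipartite-construction} with $V_1 = \{u\}$, $V_2 = \{v\}$, $V_3 = V \setminus \{u,v\}$, and $\mu = z_e$. This produces a single-edge weighted graph $G_e$ on vertex set $V$ with $\gc(G_e) \le 4$, explicitly realized by a $4 \times n$ pulse matrix $P^{(e)}$ and a diagonal $4 \times 4$ strength matrix $W^{(e)}$.

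Next, since $G$ is the edge-disjoint union of these single-edge subgraphs, i.e., the sum of the adjacency matrices of the $G_e$'s equals the adjacency matrix of $G$, iterating Lemma~\ref{composition-weighted} over all $m$ edges gives $\gc(G) \le 4m$. To sharpen this to $3m+1$, I will examine the explicit $P^{(e)}$ from Lemma~\ref{complete-bipartite-construction}: its third row has $P_{3,V_1} = P_{3,V_2} = P_{3,V_3} = 1$, so this row is the all-ones vector $\mathbf{1}^T$ regardless of which edge $e$ is being encoded. Thus when the pulse matrices $P^{(e)}$ are stacked together via the augmentation in Lemma~\ref{composition-weighted}, the resulting $4m \times n$ matrix contains $m$ identical copies of the all-ones row.

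Finally, I will invoke Lemma~\ref{redundant-rows} repeatedly, with $\eta = 1$, to merge these $m$ duplicate all-ones rows into a single row while appropriately summing their corresponding diagonal strength entries in $W$. This removes $m - 1$ rows, leaving $4m - (m-1) = 3m + 1$ rows in the final pulse matrix. Hence $\gc(G) \le 3m + 1 = O(m)$.

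The main conceptual step is the observation that the third row of the biclique construction from Lemma~\ref{complete-bipartite-construction} is edge-independent (equal to $\mathbf{1}^T$), which is what enables the saving from $4m$ down to $3m+1$; everything else is a routine bookkeeping application of the composition and row-merging lemmas already established.
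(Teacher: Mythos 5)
Your proof is correct and follows essentially the same route as the paper's: decompose $G$ into single-edge bicliques via Lemma~\ref{complete-bipartite-construction}, compose with Lemma~\ref{composition-weighted} to get $4m$, and merge the $m$ identical all-ones rows via Lemma~\ref{redundant-rows} to reach $3m+1$. The only difference is that you spell out the row-counting and the $\eta=1$ bookkeeping more explicitly than the paper does.
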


\begin{proof}
    Suppose we are given a weighted graph $G$ on $n$ vertices. Then, every edge can be constructed in at most $4$ steps by the above lemma: given an edge $e = (u, v) \in E$, choose $A = \{u\}, B = \{v\}, C = V - \{u, v\}$, so that $E = \big\{e\big\}$, and $\mu = z_e$. Lemma \ref{composition-weighted} then implies that $\gc(G) \le 4m$. Further note that the third row in the matrix $P$ is all ones, and this is common across the constructions for each edge. Therefore, from Lemma \ref{redundant-rows}, we can combine these rows into a single row by summing their strengths, thus giving us an upper bound of $3m+1$ on the total number of \rep{pulses}{Ising operations}.
\end{proof}

\tsout{Lemma \ref{complete-bipartite-construction} implies that the number of pulses required to construct any given graph is at most four times the minimum size of a partition of the edge set into complete bipartite graphs, which can be further reduced to $3\:\bp(G)+1$ using the observation that the all ones row is common across each biclique's construction, i.e., $\gc(G) \leq 3\:\bp(G)+1$. The biclique partition number of an arbitrary graph is however NP-hard to compute, and therefore, there is a need for approximate constructions.}

\rep{For unweighted graphs, we show that a decomposition of the}{Although for general weights, we give a bound of $3m+1$ total Ising operations, we can do much better for unweighted graphs. We show that there exists a decomposition of any graph's (unweighted)} edge set into at most $n-1$ star graphs (i.e., bicliques where one side of the partition has only one vertex), which can be used to give a linear bound on the graph coupling number. We refer to this construction as the {\it union-of-stars}.

\begin{theorem}\label{unweighted}
   For any unweighted graph $G = (V,E)$, $E$ can be partitioned into $n-1$ star graphs (i.e., $K_{1,s}$ for $s<n$) which in turn implies that the graph coupling number $\gc(G) \le 3n-2 = O(n)$.
\end{theorem}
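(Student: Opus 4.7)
The plan is to exhibit an explicit partition of $E$ into at most $n-1$ stars, apply Lemma~\ref{complete-bipartite-construction} to each star, and then invoke Lemma~\ref{redundant-rows} to collapse duplicate rows that appear across the star constructions, bringing the operation count down from a naive $4(n-1)$ to $3n-2$.

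To build the star partition, I would fix any ordering $v_1, v_2, \ldots, v_n$ of the vertices. For each $i \in \{1, \ldots, n-1\}$, let $L_i = \{v_j : j > i, \, (v_i, v_j) \in E\}$ and let $S_i$ be the star with center $v_i$ and leaf set $L_i$. Every edge $(v_i, v_j)$ with $i < j$ lands in exactly one star, namely the one indexed by its smaller endpoint, so the $S_i$ are edge-disjoint and cover $E$. Each nonempty $S_i$ is a $K_{1, |L_i|}$ with $|L_i| \leq n-1 < n$, and (discarding empty stars) we obtain at most $n-1$ stars, as claimed.

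Next, view each $S_i$ as a complete bipartite graph on the bipartition $\{v_i\} \cup L_i$ together with isolated vertices $V \setminus (\{v_i\} \cup L_i)$. Applying Lemma~\ref{complete-bipartite-construction} with $V_1 = \{v_i\}$, $V_2 = L_i$, $V_3 = V \setminus (\{v_i\} \cup L_i)$, and $\mu = 1$ constructs $S_i$ in $4$ Ising operations. Concatenating the pulse matrices for $S_1, \ldots, S_{n-1}$ and invoking Corollary~\ref{composition-unweighted} yields the preliminary bound $\gc(G) \leq 4(n-1)$.

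The final step, and the one requiring the key observation, is the improvement to $3n-2$. Inspecting the construction in the proof of Lemma~\ref{complete-bipartite-construction}, the third row of $P$ is the all-ones vector $(1, 1, \ldots, 1)$ regardless of the choice of $V_1, V_2, V_3$ or of $\mu$. Hence, in the concatenated pulse matrix the all-ones row appears once for each of the $n-1$ stars. By Lemma~\ref{redundant-rows}, these $n-1$ identical rows can be merged into a single row whose strength is the sum of the individual strengths, removing $n-2$ rows and yielding $\gc(G) \leq 4(n-1) - (n-2) = 3n-2$. The only nonroutine step is recognizing this structural invariance of the third row across star constructions; once that is noticed, the rest is bookkeeping.
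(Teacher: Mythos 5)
Your proof is correct and follows essentially the same route as the paper's: the same ordering-induced star partition into at most $n-1$ edge-disjoint stars, the same application of Lemma~\ref{complete-bipartite-construction} with $\mu=1$ and Corollary~\ref{composition-unweighted} to get $4(n-1)$, and the same merging of the shared all-ones row via Lemma~\ref{redundant-rows} to reach $3n-2$. The only difference is that you spell out the final counting step ($4(n-1)-(n-2)=3n-2$) more explicitly than the paper does.
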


\begin{proof}
    We first show that we can write the edge-set of $G = (V, E)$ as the disjoint union of at most $n - 1$ stars. Consider an arbitrary ordering of vertices $v_1, \hdots, v_n$. For $1 \le i \le n - 1$, define the star $S_i = \{(v_i, v_j) \in E: i < j\}$, that is, it consists of all edges $(v_i, v_j)$ where $i$ is the lower index. Each $S_i$ is a star and \tsout{that} each edge in $E$ belongs to exactly one $S_i$. Therefore, $E$ is the disjoint union of at most $n-1$ non-empty stars.

    We now show how to construct each $S_i$ in at most $4$ steps: since $G$ is unweighted, choose $\mu = 1$ in Lemma \ref{complete-bipartite-construction}. Then, each $S_i$ (with possibly other isolated vertices) can be constructed in at most $4$ steps since a star is also a biclique. This, combined with Corollary \ref{composition-unweighted}, implies that $\gc(G) \le 4(n - 1)$, which after combining rows with all ones in each construction give us the bound in the theorem, by Lemma \ref{redundant-rows}.
\end{proof}

The bound in Theorem \ref{unweighted} can be improved by finding a smaller sized partition of the edge set of the graph into bicliques \change{(using Lemma \ref{complete-bipartite-construction}), giving a bound of $\gc(G) \leq 3\:\bp(G)+1$}. However, \tsout{it is known that for complete graphs on $n$ vertices,} the biclique partition is NP-hard to compute. \change{Moreover, it is known that for complete graphs on $n$ vertices, the biclique partition number $\bp(K_n) = n - 1$} \cite{graham_pollak_loop_switching}, so that the upper bound on $\gc(G)$ still remains $3n-2$ (using union-of-stars construction). \rep{Moreover}{Note that} in any construction, one can further \rep{get a reduction in}{reduce} the $L_0$ norm by removing repeating rows in $P$, using Lemma \ref{redundant-rows}, although the amount of such a reduction can be instance dependent.

\subsection{A Lower Bound}
We next discuss a lower bound on the graph coupling number of any arbitrary graph with distinct edge weights:

\begin{lemma}
   For each $n$, there is a weighted graph $G$ with $n$ vertices such that $\gc(G) = \Omega(\log n)$.
\end{lemma}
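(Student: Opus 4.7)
The plan is to prove this via a simple pigeonhole argument on the ``sign pattern'' each edge inherits from the matrix $P$. No clever combinatorics is needed; the whole bound comes from the observation that any two edges exposed to the same sequence of bit flips across all $k$ applications of $\HIsing$ are forced to carry identical weight, so a graph with many distinct edge weights must induce many distinct patterns.

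First I would exhibit the hard graph explicitly: let $G$ be the complete graph $K_n$ with the $\binom{n}{2}$ edges assigned pairwise distinct weights, for concreteness the weights $1, 2, \ldots, \binom{n}{2}$ in any order. Suppose $G$ admits a construction $A = P^T W P \odot J$ with $\|W\|_0 = k$. Since rows of $P$ corresponding to $w_p = 0$ contribute nothing to $A$, I may discard them and assume without loss of generality that $P \in \{\pm 1\}^{k \times n}$ and every $w_p \neq 0$.

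Now associate to every edge $\{i,j\}$ its \emph{signature}
\[
    \sigma(i,j) \;:=\; \bigl(P_{1,i}P_{1,j},\; P_{2,i}P_{2,j},\; \ldots,\; P_{k,i}P_{k,j}\bigr) \;\in\; \{\pm 1\}^k.
\]
Because $J_{i,j} = 1$ for $i \neq j$, the weight produced on edge $\{i,j\}$ is
\[
    A_{i,j} \;=\; \sum_{p=1}^{k} w_p\, \sigma(i,j)_p,
\]
so any two edges with the same signature must receive the same weight in the construction. As all $\binom{n}{2}$ weights of $G$ are distinct by design, the $\binom{n}{2}$ signatures must also be pairwise distinct. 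Since $\lvert\{\pm 1\}^k\rvert = 2^k$, we conclude
\[
    2^k \;\ge\; \binom{n}{2}, \qquad \text{hence} \qquad k \;\ge\; \log_2\!\binom{n}{2} \;=\; \Omega(\log n).
\]

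The only subtlety worth flagging is the bookkeeping that $\gc(G)$ is defined as $\|W\|_0$ rather than the nominal row count of $P$; this is why the zero-strength rows must be discarded at the outset. After that, the proof is a one-line pigeonhole and I do not anticipate any real obstacle.
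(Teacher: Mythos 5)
Your proof is correct and follows essentially the same route as the paper: both arguments observe that each off-diagonal entry of $P^TWP \odot J$ is determined by a $\{\pm 1\}^k$ sign pattern (your ``signature''), so a complete graph with all-distinct edge weights forces $2^k \ge \binom{n}{2}$, giving $k = \Omega(\log n)$. Your explicit handling of zero-strength rows is a minor tidiness improvement but does not change the substance.
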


\begin{proof}
   For any $k$ note that $\Big(P^TWP \odot J\Big)_{i, j} = \displaystyle \sum_{a = 1}^k P_{a, i} P_{a, j} W_{a,a}$ if $i \neq j$ and it is equal to $0$ otherwise. That is, each non-diagonal entry of $P^TWP \odot J$ is a linear combination of \rep{pulse}{Ising operation} strengths $W_{a,a}$ with coefficients either $1$ or $-1$ (since $P \in \{-1, 1\}^{k \times n}$). The set of all such linear combinations has cardinality at most $2^k$. Consider any complete graph on $n$ vertices with distinct edge weights. Then, there are at least $\frac{n(n - 1)}{2}$ distinct entries in $A$. When $\frac{n(n - 1)}{2} > 2^k + 1$, by our previous observation, $A \neq P^TWP \odot J$ for \emph{any} diagonal matrix $W \in \mathbb{R}^{k \times k}$ and matrix $P \in \{\pm 1\}^{k \times n}$. That is, for $P^TWP \odot J = A$ to hold, we need $k \ge \log_2 \Big(\frac{n(n - 1)}{2} - 1\Big)$, which implies that $\gc(G) = \Omega(\log n)$.
\end{proof}

\subsection{Optimal Solutions Through Brute Force}\label{sec:brute}
We describe a mixed-integer program (MIP) that can be solved to optimality. In the preceding sections, we have given polynomial constructions for generating both weighted and unweighted coupling graphs, which gave upper bounds for the graph coupling number. It is important to note that MIPs are not generally solvable in polynomial time; therefore, this is not an efficient method for finding \rep{pulse}{operation} sequences. Instead, by solving the MIP to optimality, we identify the optimality gap and quantify the potential gains if a more efficient construction can be found.

To convert the graph coupling problem into a MIP, we construct a {\it complete} \rep{pulse}{operation} matrix $P$ by enumerating all possible rows with elements $\pm1$. In other words, we must consider all possible choices of bit flips to prove a given sequence is optimal. Since negating a row in $P$ does not change $P^TWP$, there are $2^{n-1}$ unique rows of $P$ to consider. Thus, finding the optimal sequence to produce a coupling graph with adjacency matrix $A$
reduces to finding a strength matrix $W$ such that $P^TWP \odot J = A$. The only variable is the diagonal strength matrix $W$, and the objective is to minimize its $L_0$ or $L_1$ norm. A complete description of the MIP is included in \refapp{app:MIP}.

We note that in practice, we have observed that sub-sampling the complete $P$ matrix and running the MIP for a fixed amount of time often produces a tractable \textit{good} solution (i.e., better running time and memory requirements), however in this case, we cannot obtain provable bounds to optimality. In the next section, we compare optimal solutions to the constructions for small graphs of up to 8 vertices (optimal up to the choice of the parameter $M$, which was set to the sum of the edge weights in the simulations). The brute force search to optimality is time intensive with some graphs taking hours to complete or even longer than 24 hours \footnote{The MIP was solved using Python 3.5 and Gurobi 9 on a \tsout{high performance computing (HPC)}{computing} cluster of machines with an Intel Xeon E5-2670 2.3-GHz CPU and 128GB of main memory.}.

\section{Numerical Experiments\tsout{: Optimal Results}}
\change{In order to test the utility of the union-of-stars construction, we conduct two sorts of numerical experiments. In \refsec{sec:results}, we compare the required resources to the derived upper bounds and the optimal controls obtained by brute force. To demonstrate the utility of running Max-Cut QAOA with the union-of-stars construction of the cost Hamiltonian, in \refsec{sec:noisy} we compare the simulated performance of our construction and the standard CNOT construction in the presence of noise.}

\subsection{\change{Optimal Results}}\label{sec:results}
We compare the union-of-stars construction with optimal \footnote{subject to the $M$-constraint described in the previous section and in \refapp{app:MIP}} sequences of uniform $\HIsing$ operations, where $J$ is the adjacency matrix of the graph $K_n$.
First, we generated 96 Erd{\~o}s-R{\'e}nyi random unweighted graphs with 7 vertices, where each edge has a probability $p$ to exist, with 4 random graphs for each $p \in 0.04 \times \{1,2,...,24\}$. For each graph, we determined the optimal solution using the brute force MIP optimizing for either the minimal $L_0$ or $L_1$ norm.
While the union-of-stars construction has upper bound $3n-2$ for the number of \rep{$\HIsing$ pulses}{Ising operations} ($L_0$ norm), in practice many graphs can be constructed with fewer than $n-1$ stars. We chose to start with the largest star (highest degree vertex), then add the next largest star, and so on until the target graph was realized. In addition, we check each $P$ matrix for identical rows and combine them by summing the corresponding strengths.

\reffig{fig:random-unweighted} compares the union-of-stars solution with the upper bound and the brute force optimal solution. We observe that the union-of-stars typically performs within a factor of two of optimality in the $L_0$ norm, while the $L_1$ norm is closer to a factor of 3 worse.

\begin{figure}
    \subfloat[unweighted graphs, $L_0$ norm ]{{\includegraphics[width=0.47\textwidth]{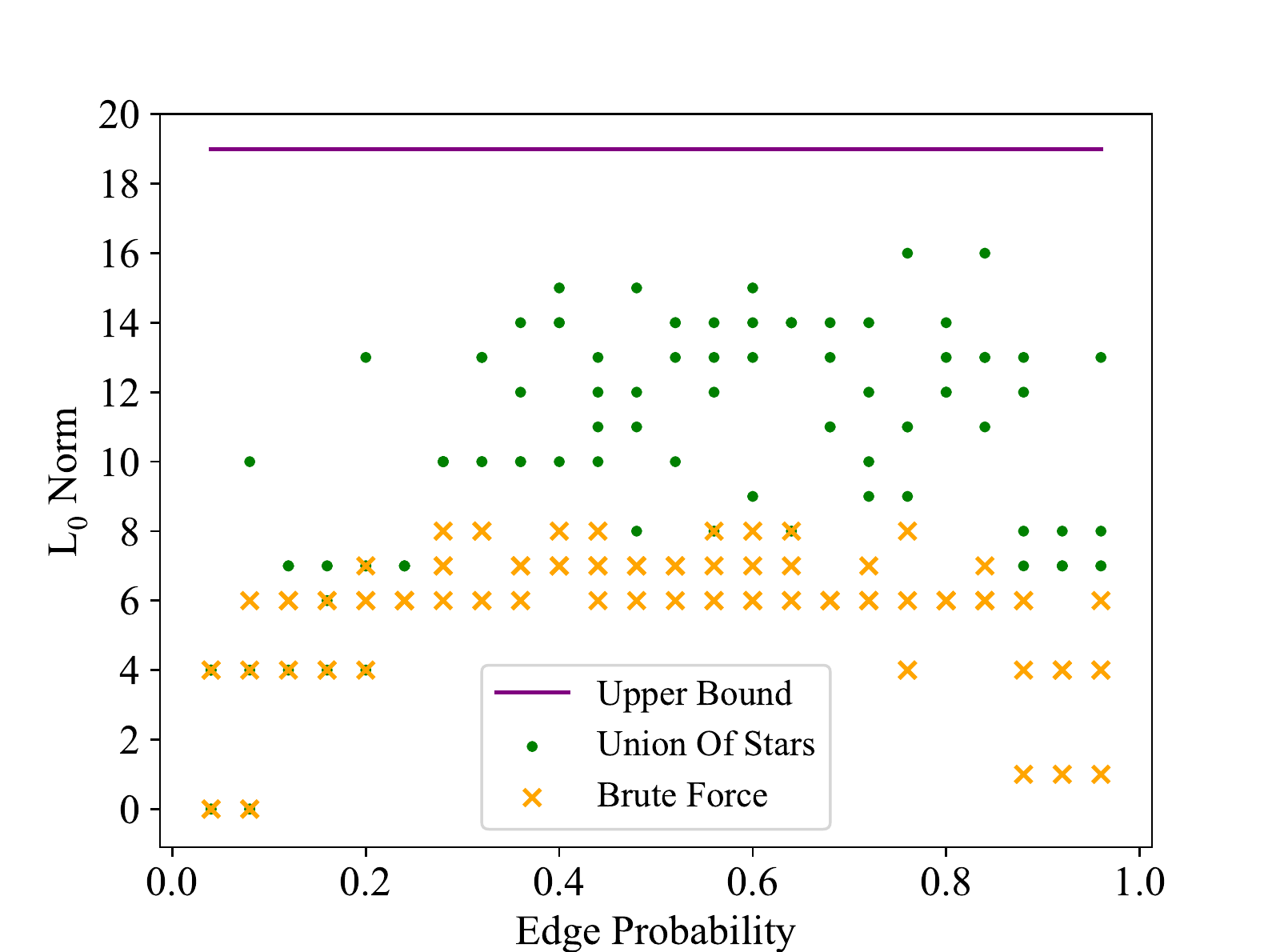} }}
    \hfill
    \subfloat[unweighted graphs, $L_1$  norm]{{\includegraphics[width=0.47\textwidth]{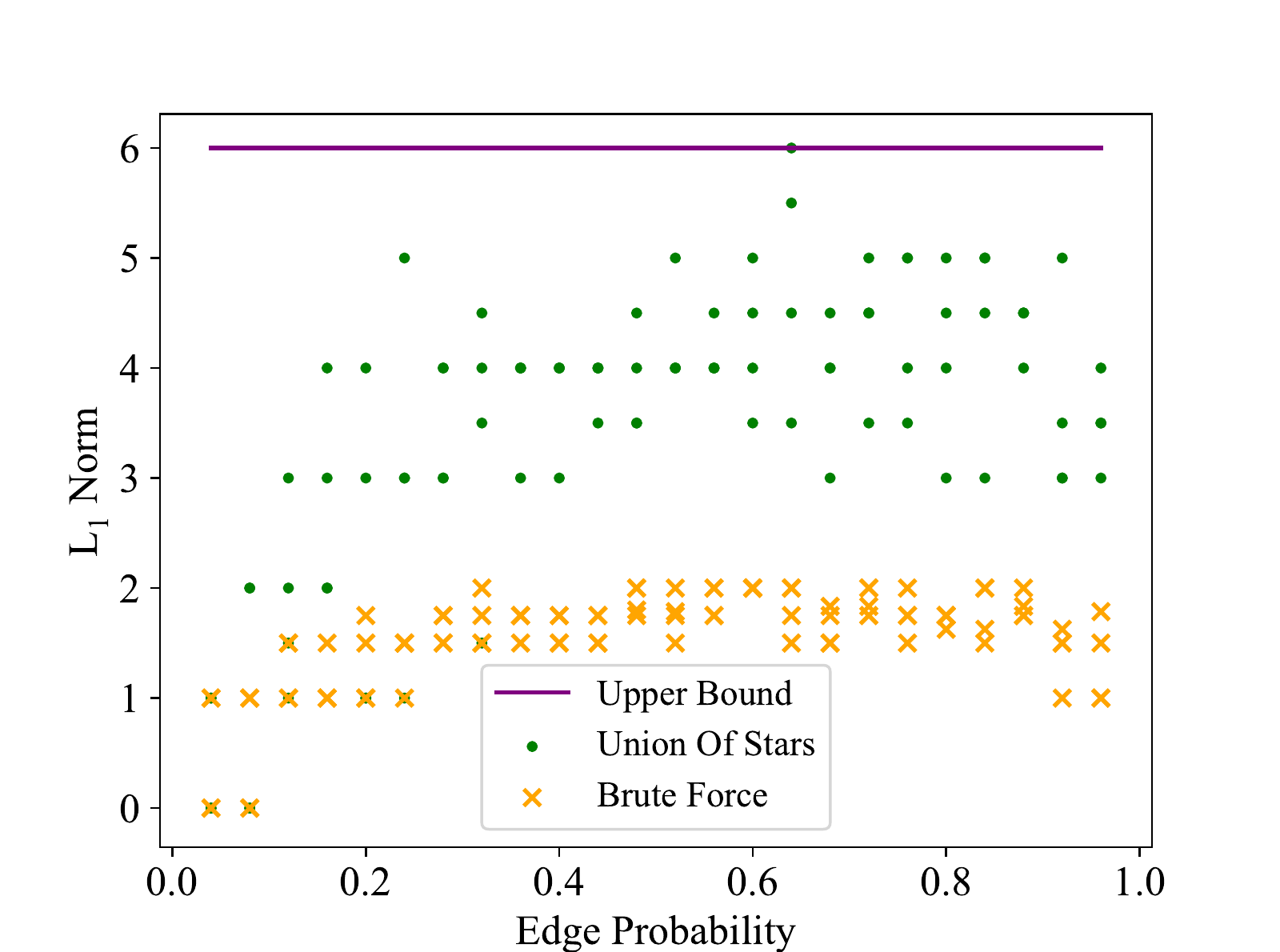} }}
    \caption{A comparison of the union-of-stars construction to the optimal sequence found with the brute force MIP for random unweighted graphs with 7 vertices. The upper bound proved with the union-of-stars is $3n-2$ for the $L_0$ norm (total number of \rep{$\HIsing$ pulses}{Ising operations}) and $n-1$ for $L_1$ (sum of the absolute magnitude of strengths).}
    \label{fig:random-unweighted}
\end{figure}

Next, we tested random weighted graphs by randomly assigning edge weights sampled uniformly from $\{1,2,3\}$ to Erd{\~o}s-R{\'e}nyi graphs with seven vertices and similar edge inclusion probabilities. As shown in \reffig{fig:random-weighted}, the union-of-stars method again performs well.
For comparison, the \textit{expected} upper bounds of $3\bar{m}+1$ for the $L_0$ norm and $3\bar{m}$ for the $L_1$ norm are plotted, where $\bar{m} = pn(n-1)/2$ is the expected number of edges for edge inclusion probability $p$.
Notably, solving the MIP to optimality for the $L_0$ norm takes significantly longer compared to the unweighted graphs. Some instances in \reffig{fig:random-weighted}(a) timed out (24 hrs) before reaching optimality; in this case, the sub-optimal values are plotted instead.
In \reffig{fig:random-weighted}(b), we observe that when we optimize instead for $L_1$ norm on weighted graphs, the optimal solution is much better than the union-of-stars construction for graphs with many edges. 

\begin{figure}
    \subfloat[weighted graphs, $L_0$ norm ]{{\includegraphics[width=0.47\textwidth]{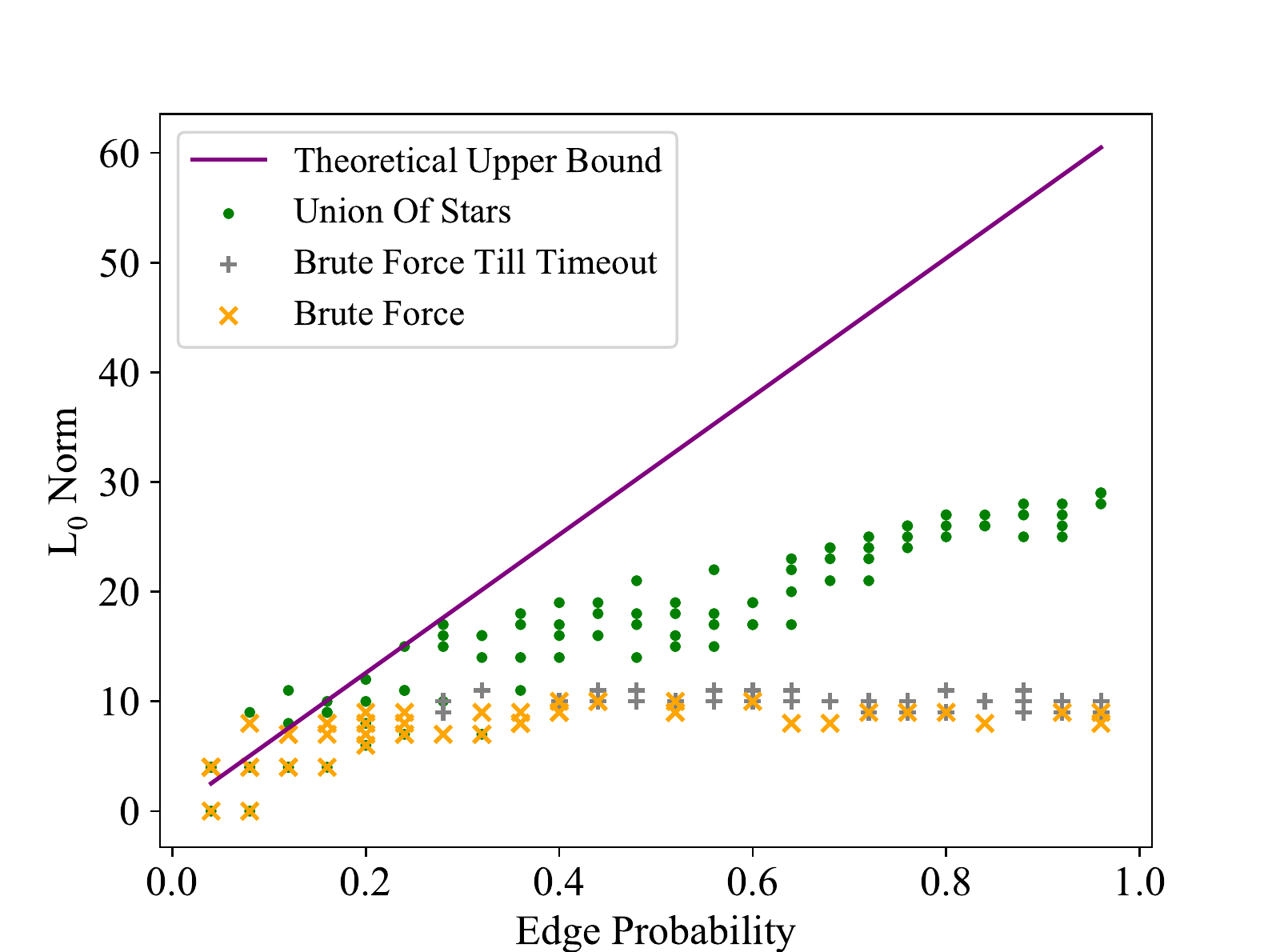} }}
    \hfill
    \subfloat[weighted graphs, $L_1$  norm]{{\includegraphics[width=0.47\textwidth]{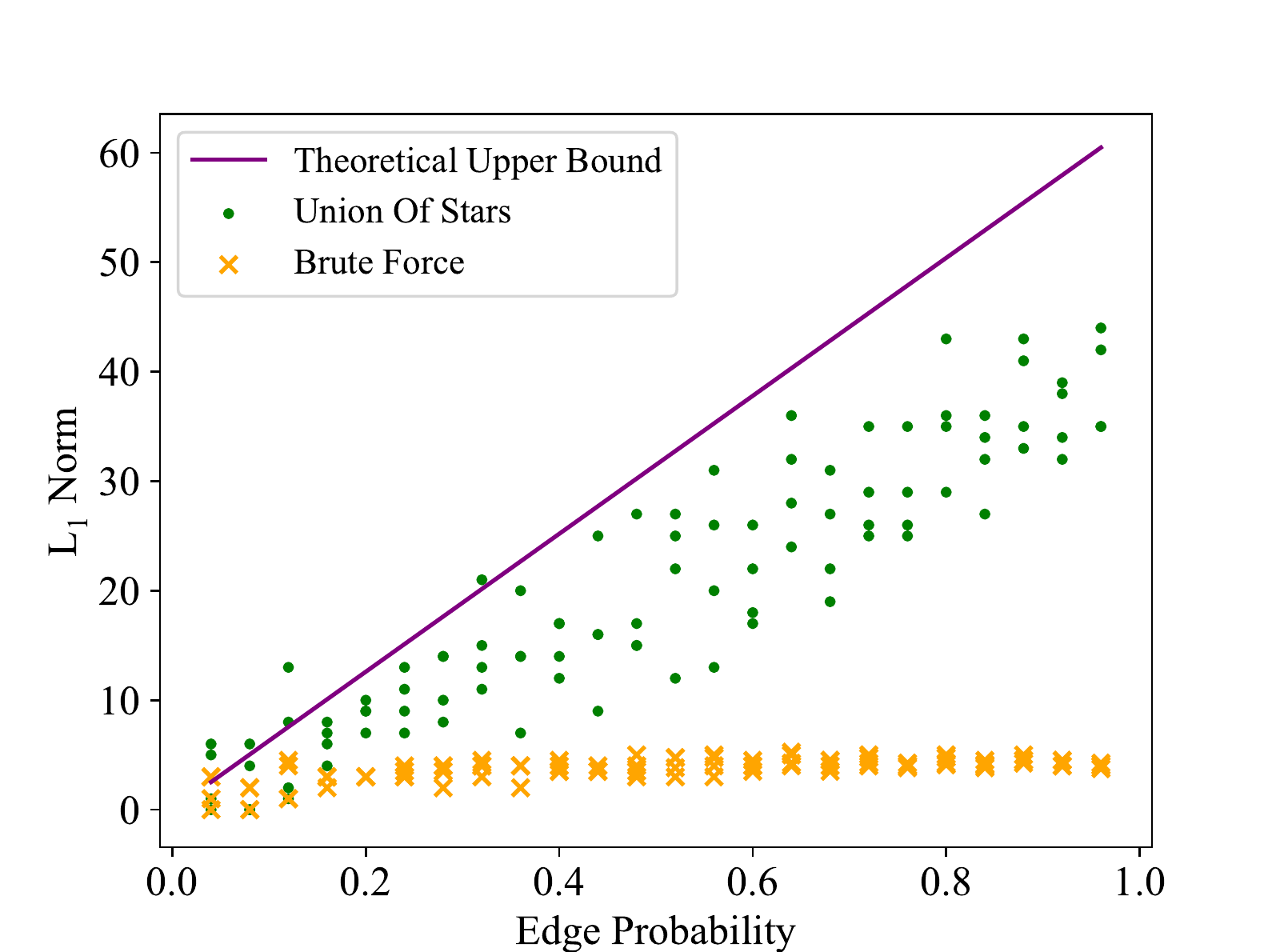} }}
    \caption{A comparison of union-of-stars against optimality for random weighted graphs on 7 vertices. The upper bound now depends on the \textit{expected} number of edges as described in the main text. That some constructions exceed this expectation reflects the non-zero probability of having more edges than $\bar{m}$.}
    \label{fig:random-weighted}
\end{figure}

Finally, we used the MIP to find the worst case graph coupling number ($L_0$ norm) for all non-isomorphic unweighted graphs with up to 8 vertices, using the enumeration of \citet{graphenumeration}. These results are given in \reffig{fig:fig3}.
\begin{figure}
    \includegraphics[width=0.47\textwidth]{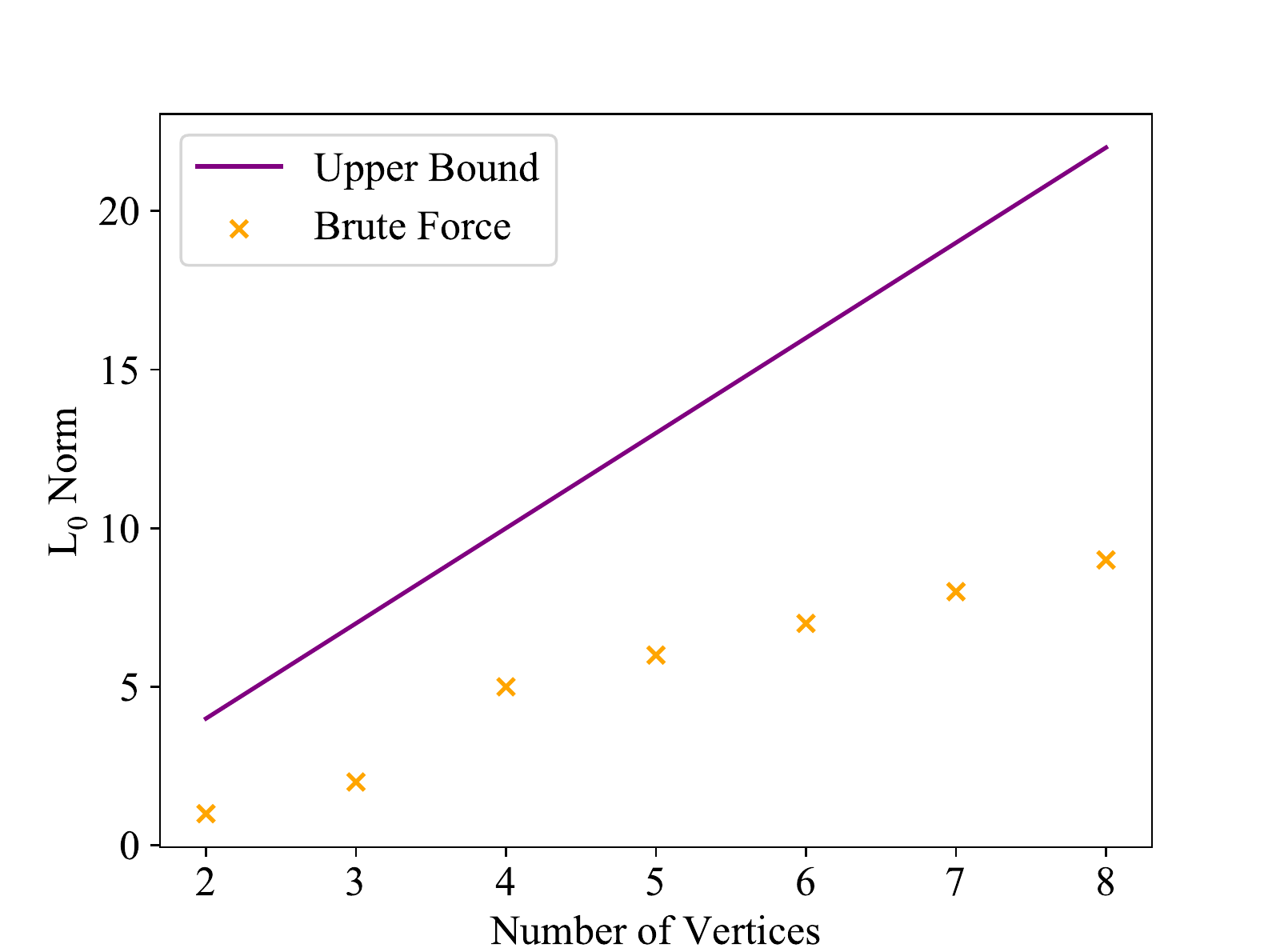}
    %\qquad
    \caption{Largest graph coupling number observed in the optimal generation of unweighted graphs of up to 8 vertices compared to the $3n-2$ upper bound from the union-of-stars construction.}
    \label{fig:fig3}
\end{figure}
While the linear pattern we see between the worst case graph coupling number and the number of vertices does not necessarily hold for larger graphs, it leads us to suppose that the number of \rep{$\HIsing$ pulses}{Ising operations} required to construct any unweighted graph on $n$ vertices with uniform ${\HIsing}$ and single-qubit bit flips may be bounded from above by $n+1$ (i.e., graph coupling number is linear), and, therefore, the union-of-stars method is order-optimal for small graphs.

\subsection{\change{Noisy QAOA}}\label{sec:noisy}

\change{In the standard QAOA \cite{farhi_quantum_2014}, a cost and a mixing operator are applied in an alternating fashion in order to drive the solution to an extremal state of the cost operator. 
The quantum hardware is programmed to prepare the state}
\begin{equation}
    \change{\ket{\vec{\gamma},\vec{\beta}} = \prod_{k=1}^p e^{-i \beta_k B} e^{-i \gamma_k C^{\prime}} \ket{+}^{\otimes n}} \,,
\end{equation}
\change{where the mixing operator is $B=\sum_{i=1}^n \sigma_i^x$ and the Max-Cut cost operator $C^{\prime}$ is related to the coupling operator $C$ (\refeq{eq:coupling})}:
\begin{equation}
    \label{eq:Max-Cut-cost}
    \change{C^{\prime} = \sum_{i=1}^{n-1} \sum_{j=i+1}^{n} a_{i,j} \frac{1-\sigma_i^z \sigma_j^z}{2} =  - \frac{C}{2} + \sum_{i=1}^{n-1} \sum_{j=i+1}^{n} \frac{a_{i,j}}{2}}\,.
\end{equation}
\change{After sampling the quantum state, the estimated expectation value of the cost, $\braket{C^{\prime}}=\braket{\vec{\gamma},\vec{\beta}|C^{\prime}|\vec{\gamma},\vec{\beta}}$, is computed, and the $2p$ parameters $\vec{\gamma},\vec{\beta}$ are classically optimized through repeated calls to the quantum hardware, increasing the likelihood of observing a Max-Cut solution.}

\change{The ZZ-coupling terms found in $C^{\prime}$ can be implemented in a quantum circuit in various ways, depending on the physical quantum architecture and its natural, native gates. For digital quantum computers, the cost unitary is usually programmed with two CNOT gates and a parameterized Z-rotation (\reffig{fig:standardZZ}) for each edge in the problem graph. For a graph with $m$ edges, this requires at least $2m$ CNOT gates, and more if SWAPs are required to connect distant qubits. Here we compare and contrast this ``standard'' compilation to the union-of-stars method presented previously, including quantum error channels on all gates.}

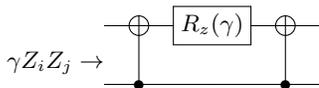
\begin{figure}[!tb]
           \[ \Qcircuit @C=1em @R=0.75em {
&	&	&		&		&		&	\targ	&	\gate{R_z(\gamma)}	&	\targ	&	\qw	\\
&		&		&	\gamma Z_iZ_j\rightarrow	&		&		&		&		&		&		&		&		\\
&		&	&		&		&		&	\ctrl{-2}	&	\qw	&	 	\ctrl{-2}	&	\qw
} \]
\caption{\change{Decomposition of the $Z_iZ_j$ terms into CNOT gates. Index $i,j$ directly map from graph edges to arbitrary qubits, requiring either arbitrary connectivity or inserting  swap gates to accommodate architectures with limited connectivity. $R_{z}(\gamma )=\exp (-i\gamma \sigma _{z}/2)$.}}
\label{fig:standardZZ}
\end{figure}

\change{Since the two methods scale differently in terms of their gate count, we expect different performance when noise is considered. To test this, we investigate both compilation models with four example graphs when using a composite quantum noise model simulated in Qiskit~\cite{Qiskit}.
Specifically, we consider depolarizing noise channels that act on all CNOT gates and global M{\o}lmer-S{\o}renson or Ising operations described in the rows of $P$. For CNOT operations these are necessarily two-qubit depolarizing noise channels, but for the Ising operations used in the union-of stars method, we consider $N$-qubit depolarizing noise channels---a worst case assumption. We consider these as ``major'' sources of noise as these operations are likely more prone to error than single qubit operations. Additionally, we also consider ``minor'' noise sources, in which all single qubit gates also experience single qubit depolarizing and phase noise. All minor errors are fixed to occur $10\%$ as frequently as the major errors. All qubits also experience measurement noise, which is also grouped as a minor source of error~\cite{Nielsen2011}.}
\begin{figure*}
    \centering
    \subfloat[]{\includegraphics[width=0.49\textwidth]{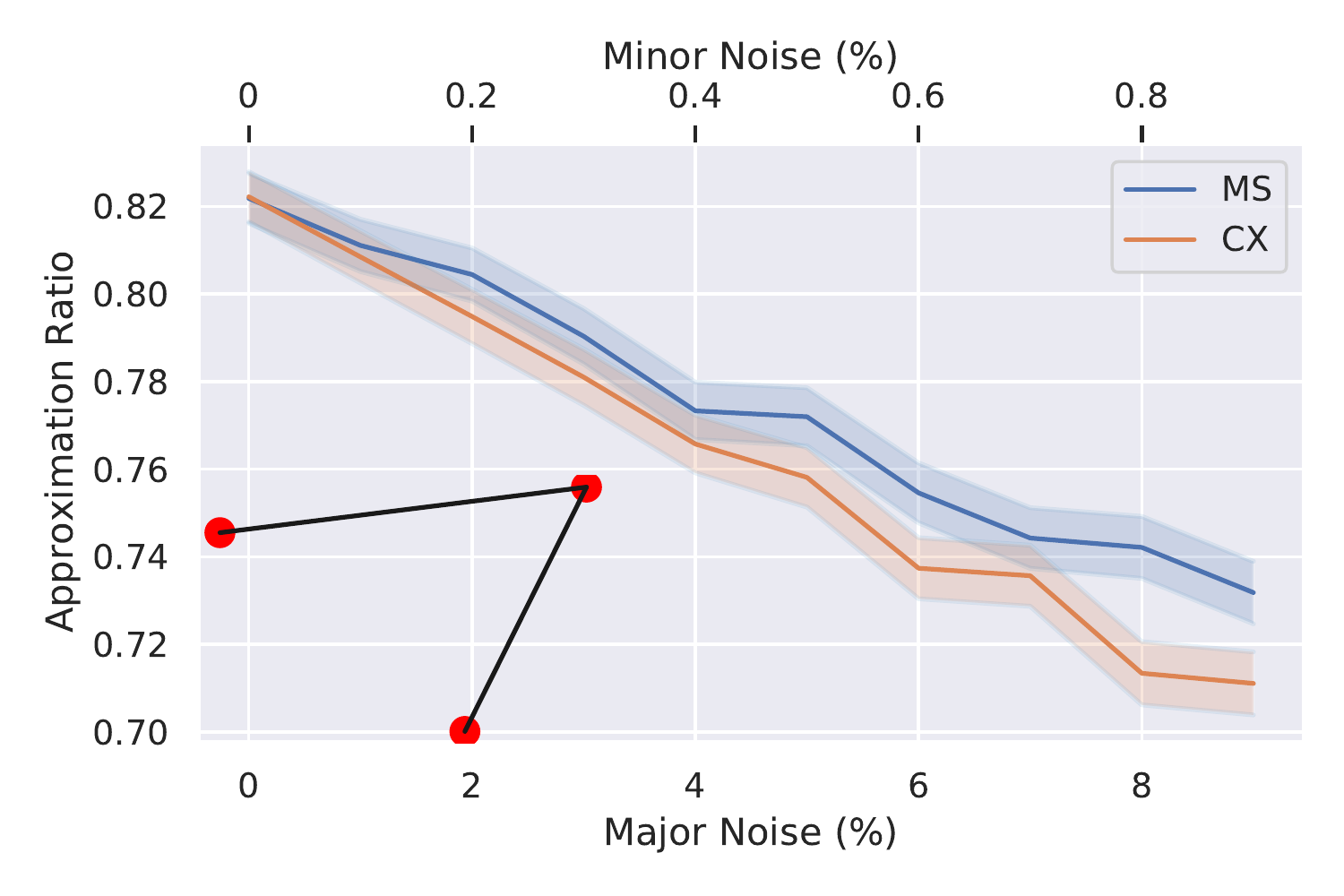}}
    \subfloat[]{\includegraphics[width=0.49\textwidth]{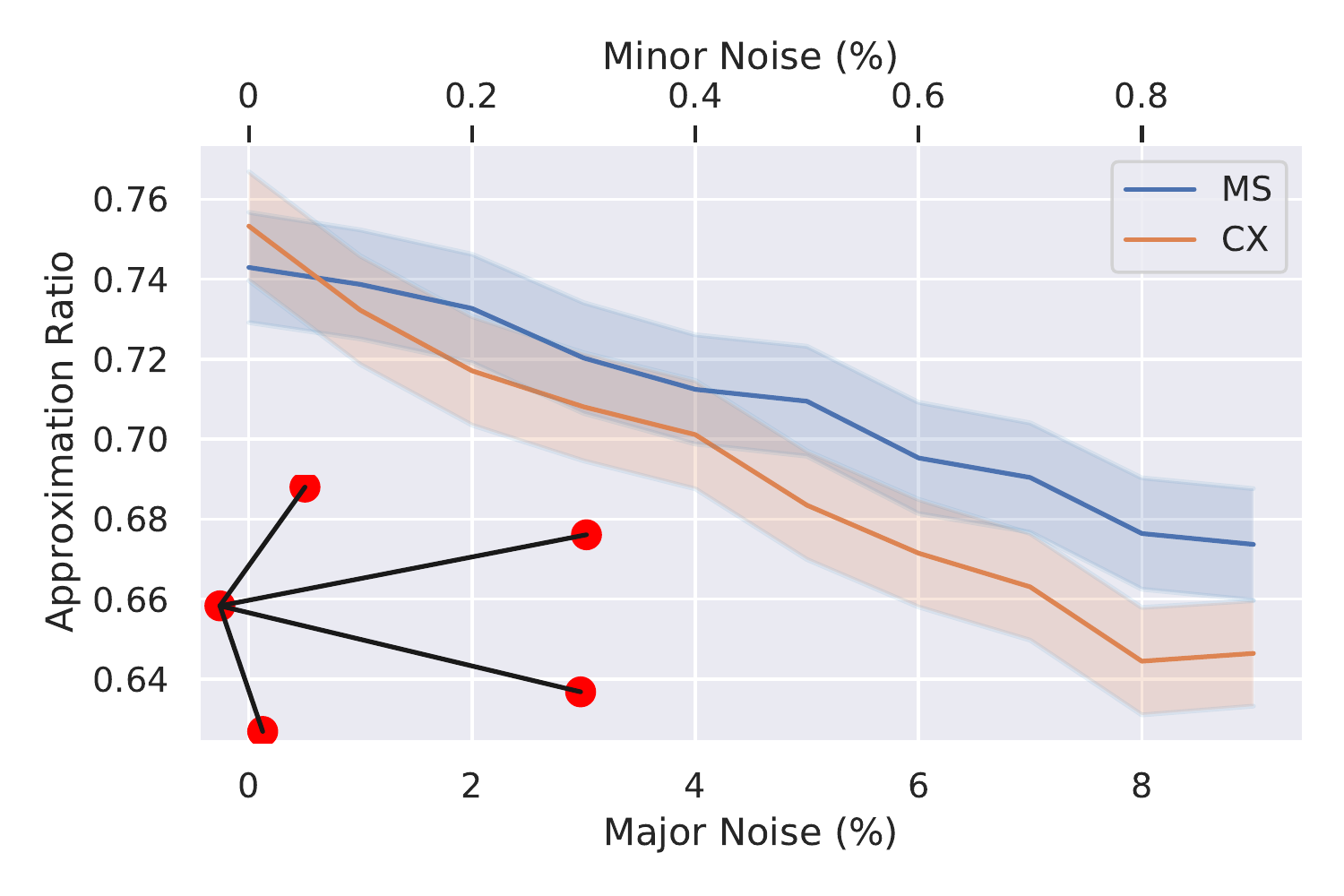}}
    \hfill
    \subfloat[]{\includegraphics[width=0.49\textwidth]{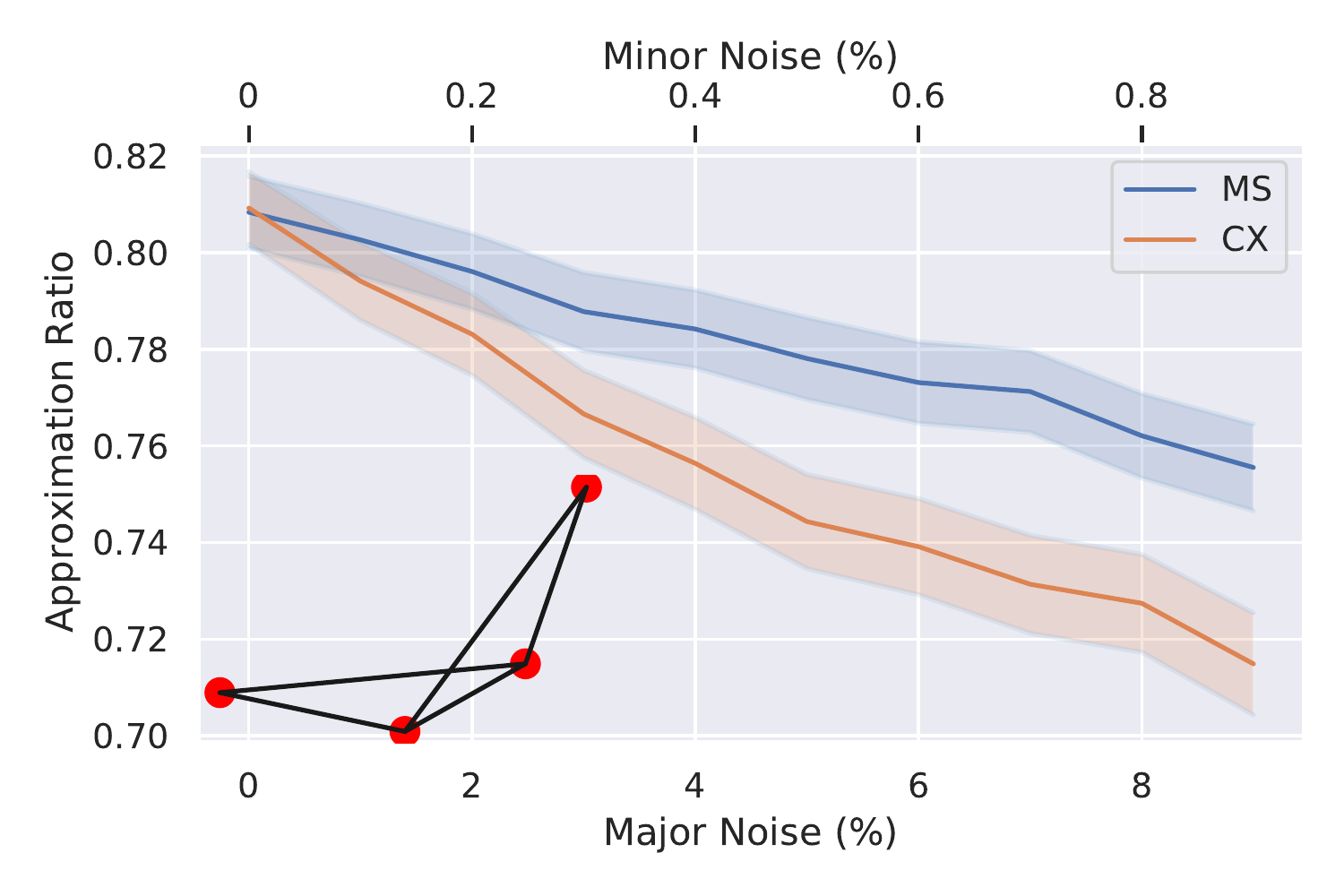}}
    \subfloat[]{\includegraphics[width=0.49\textwidth]{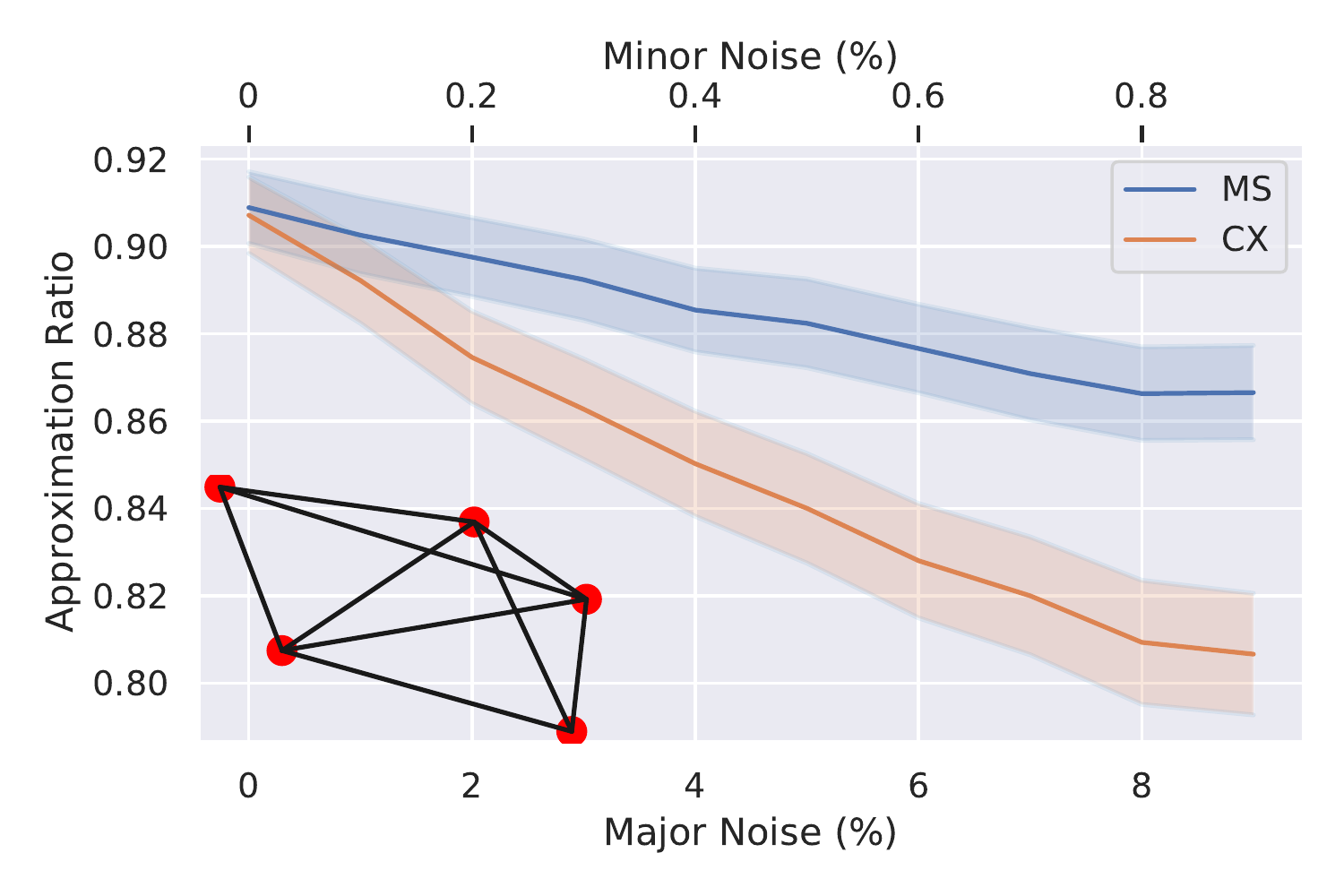}}
    \caption{\change{Approximation ratio when simulating the noisy $p=1$ QAOA, for two different compilation methods. ``CX'' compilation refers to compiling in terms of CNOT gates while ``MS'' refers to compiling following the union-of-stars method and building in terms of global M{\o}lmer-S{\o}renson interactions. Major and minor noise sources are detailed in the text and have a fixed relationship as indicated on the horizontal axis. Each subfigure shows an inset displaying the example unweighted graph that is considered for both compilation schemes.}}
    \label{fig:Compliation_Noise}
\end{figure*}

\change{In Fig.~\ref{fig:Compliation_Noise} we show {the results of} noisy simulations of Max-Cut QAOA for four small graphs at $p=1$ and find that, due to the reduced required quantum resources in the union-of-stars method, we see reduced error in the resulting approximation ratio, $\braket{C^{\prime}}/C^{\prime}_{\mathrm{max}}$. \reffig{fig:Compliation_Noise}(a) shows similar performance between the two compilation methods, while \reffig{fig:Compliation_Noise}(d) shows significantly better performance for the ``MS" compilation built from the union-of-stars method due to its fewer quantum operations. This indicates that there are classes of graphs, including dense unweighted graphs, that are more favorable to construct with the union-of-stars method.}

\subsection{\change{Resource Estimates}}\label{sec:runtime}
\change{Having shown an efficient construction of arbitrary coupling graphs through repeated applications of the Ising interaction interleaved with individual bit flips, we now estimate how long such operation sequences will take to perform. We estimate operation times based on our experience with the hyperfine qubit in \Yb \cite{Herold2016, Fallek2016, Meier2019}, however, these gate durations are similar to most trapped ion quantum computing hardware within an order of magnitude.}

\change{We assume a single-qubit bit flip takes time $T_{\pi}=5~\mu$s and a two-qubit MS gate $T_{\mathrm{MS}}=100~\mu$s. For the center of mass mode of an $n$-ion chain in a harmonic trap, $b_{j,0}=1/\sqrt{n}$, and therefore the native coupling $J_{i,j}$ scales as $1/n$ for fixed intensity and detuning, following \refeq{eq:MS-Jij}. Therefore, we estimate the time for an Ising operation is $T_{\mathrm{Ising}}=n (50~\mu$s) \footnote{For very large ion crystals, this time can be reduced by at least an order of magnitude by increasing the gate laser intensity and reducing the mode detuning.}. In the union-of-stars construction, each Ising operation is assigned a strength, and the $L_1$ norm gives the sum of the absolute magnitude of these strengths. The number of rounds of bit flips is one more than the number of Ising operations (rows of $P$), or $L_0+1$. Assuming the bit flips are performed in parallel, then the total time $\tau_G$ to implement a coupling graph is}
\begin{equation}
    \change{\tau_G = (L_0 + 1) T_{\pi} + L_1 T_{\mathrm{Ising}}}
\end{equation}
\change{For unweighted graphs, our upper bounds give $\tau_G < (3n+1) T_{\pi} + (n-1) T_{\mathrm{Ising}}$. Therefore, implementing each Max-Cut QAOA cost layer of an arbitrary, unweighted 10-node graph will take no more than 5~ms, while a 100-node graph would take at worst 500~ms, which could be further reduced by increasing laser intensity. For comparison, the $T_1$ time of the hyperfine ground state qubit in \Yb is effectively infinite and $T_2$ exceeding 10 minutes has been demonstrated \cite{Wang2017}. Implementation time is proportional to the number of QAOA layers $p$, and given that our bound is not tight, this is a pessimistic estimate. In \reffig{fig:random-unweighted}, we observe that total Ising and bit flip operation counts are often two to five times lower than this bound. When combined with further reduction of the Ising operation time, Max-Cut QAOA with hundreds of ions is feasible.}

\section{Concluding Remarks}
We have provided a method to construct arbitrary coupling operations on quantum spin-systems using only global ${\HIsing}$ operations and single qubit bit flips. With this method, the number of \rep{$\HIsing$ pulses}{Ising operations} necessary scales linearly in the number of qubits for unweighted graphs, and linearly in the number of edges for weighted graphs. An interesting consequence of this trend is that the use of global, infinite-range entangling operations may be more efficient than the use of individual two-qubit gates when applied to algorithms involving dense, unweighted coupling operations (such as QAOA). This is because the number of two-qubit gates needed to construct a coupling operation scales in the number of edges rather than vertices. Although some two-qubit gates can be performed in parallel on certain hardware, sparse hardware connectivity graphs may also require many SWAP operations.
\change{Recent work, which cites a preliminary version of this article, suggests that non-standard QAOA implementations, like the one presented here, are necessary for scaling to large Max-Cut graphs \cite{lotshaw_scaling_2022}.}

While our union-of-stars construction performs well,
we have not provided an efficient, scalable algorithm to find the \rep{optimal pulse}{\emph{optimal} operation} sequence for any arbitrary graph. We speculate that this problem is NP-hard, but \tsout{as we have been approaching it from a numerical perspective,} we have not rigorously proven NP-hardness. Therefore, there is scope for further research to answer several important questions. For example, can we solve the problem quickly and optimally for certain types of graphs? Can we establish a tighter upper bound on construction? And is this problem \rep{even NP-hard}{NP-hard in general}?

It is interesting to consider whether the complexity of compilation will ever affect what we define as the complexity of a quantum algorithm. For example, if \rep{if we apply our compilation technique to an implementation of QAOA for the Max-Cut problem}{finding an efficient compilation requires an exhaustive search}, are we just replacing \rep{an}{one} NP-hard problem with \rep{an NP-hard compilation problem}{another}? \rep{One reason why this may not be a significant concern is that we can solve the compilation problem polynomially (using union-of-stars}{For Max-Cut QAOA, our union-of-stars construction provides feasible compilations with polynomial complexity, but} without the guarantee of optimality. \tsout{and still find a feasible implementation of the target quantum circuit.} This optimality gap will only ever be a serious concern if we find that our distance to optimality in graph construction is directly correlated to our distance to optimality of our target algorithm (e.g. QAOA) due to a significant increase \change{in} noisy operation count. For now, particularly for unweighted graphs, the \tsout{sub-optimal} union-of-stars method we have presented serves as an effective solution to construct arbitrary coupling operations.

\begin{acknowledgements}
This material is based upon work supported by the Defense Advanced Research Projects Agency (DARPA) under Contract No. HR001120C0046. The authors are also grateful to Hassan Mortagy for his help in running the experiments on the \tsout{high performance} computing cluster at Georgia Tech.
\end{acknowledgements}

\appendix
\section{Example\label{app:example}}
As a simple example, we will show the construction of a target coupling graph on 3 qubits using only global ${\HIsing}$ (characterized by $J$) and singe-qubit bit flips. We will assume that our global ${\HIsing}$ is perfectly uniform, so $J$ is the adjacency matrix of $K_3$, and we will consider the construction of the simple coupling graph with a unit-weight edge between qubits 1 and 2, and a unit-weight edge between qubits 2 and 3. Therefore, we have the following $A$ and $J$,
\begin{align}
A = 
\begin{pmatrix}
0 & 1 & 0\\
1 & 0 & 1\\
0 & 1 & 0
\end{pmatrix},
J = 
\begin{pmatrix}
0 & 1 & 1\\
1 & 0 & 1\\
1 & 1 & 0
\end{pmatrix}
\end{align}
Now, we must find some diagonal $W \in \mathbb{R}^{k\times k}$ and $P \in \mathbb{R}^{k \times 3}$ so that $A = P^T W P \odot J$ is satisfied. A solution $P$ and $W$ that happens to be optimal in graph coupling number ($k$) can be found via the union-of-stars method and they are,
\begin{align}
P = 
\begin{pmatrix}
1 & 1 & 1\\
1 & -1 & 1
\end{pmatrix},
W =
\begin{pmatrix}
0.5 & 0 \\
0 & -0.5
\end{pmatrix}
\end{align}
To translate this to an experimental implementation of ${\HIsing}$ and $X_j$ \rep{pulses}{operations}, we iterate through each diagonal element of $W$ and each corresponding row of $P$. So first, because $W_{1,1} = 0.5$ we apply a global ${\HIsing}$  of strength 0.5, and because $P_1 = [1,1,1]$, we apply no $X_j$ operations. The coupling operator is given by
\begin{equation}
     C_1 = 0.5\sigma_1^z \sigma_2^z + 0.5\sigma_2^z \sigma_3^z + 0.5\sigma_1^z \sigma_3^z\,.
\end{equation}
This corresponds to a complete coupling graph $K_3$ with edge-weights of 0.5. In the second step, we apply a global ${\HIsing}$ of strength $-0.5$. We achieve the negation by switching the sign of our detuning from the center of mass mode (see \refeq{eq:constantJ}). Because $P_2 = [1,-1,1]$, we also apply $X_2$ before and after the $\HIsing$. These bit flip operations invert the basis of qubit 2 and negate coupling operation, i.e. $\sigma_x \sigma_z \sigma_x = -\sigma_z$. This coupling operator is
\begin{align}
    C_2 &= (i \sigma_2^x)(-0.5\sigma_1^z \sigma_2^z - 0.5\sigma_2^z \sigma_3^z - 0.5\sigma_1^z \sigma_3^z)(-i \sigma_2^x) \nonumber\\
    &= 0.5\sigma_1^z \sigma_2^z + 0.5\sigma_2^z \sigma_3^z - 0.5\sigma_1^z \sigma_3^z \,.
\end{align}
The overall coupling operation is
\begin{equation}
    C = C_1 + C_2 = \sigma_1^z \sigma_2^z + \sigma_2^z \sigma_3^z \,.
\end{equation}
This corresponds to the desired coupling graph, a unit-weight $ZZ$-spin coupling between qubits 1 and 2 as well as between qubits 2 and 3.

\section{MIP Formulation of Brute Force $L_0$ minimization\label{app:MIP}}

As discussed in \refsec{sec:brute}, we can use a mixed integer program to find the optimal \rep{pulse}{operation} sequence for small graphs that minimizes the $L_0$ norm of the strength matrix $W$. Recall that our problem is:
$$\min \|W\|_0: \{P^TWP \odot J = A\}, $$
where $W$ is the diagonal strength matrix, $P$ is a matrix with $\{\pm1\}$ entries, $A$ is the weighted adjacency matrix of the graph to be constructed and $J$ is the coupling matrix given by $J_{i,j}=1$ for $i\neq j$ and 0 otherwise. One can formulate a mixed integer program to track the $L_0$ norm of $W$, given all possible \rep{pulse sequences as a pulse matrix}{bit flip combinations encoded in} $P \in \{\pm1\}^{k \times n}$ ($k = 2^{n-1}$ for an $n$ node graph). Let $b_i$ denote a binary variable that is equal to 1 whenever $W_{i,i}$ is non-zero, and 0 otherwise. Then, the MIP formulation is simply:
\begin{align}
    \min &\sum_{i} b_i\\
    & W_{i,i} \leq b_i M, ~~ \text{for } i \in [k],\\
    & W_{i,i} \geq - b_i M, ~~ \text{for } i \in [k],\\
    & (P^TWP)_{i,j} = A_{i,j} ~~ \text{for } i\neq j, \\
    &b_{i} \in \{0,1\}, \text{for } i\in [k], 
\end{align}
where $M$ is an upper bound on the absolute value of the strength matrix. We show that for large enough and finite $M$ (set equal to the upper bounds on $\|W_{\star}\|_\infty$ in the following theorem), the above formulation is equivalent to minimizing the $L_0$ norm of the strength matrix $W$.

\begin{theorem}\label{thm:upperbound}
    Given a graph $G = (V, E, z)$, let $W_\star$ be the optimal strength matrix for the graph coupling problem for $G$, and let $r = \gc(G) = \|W_\star \|_0$. Also let $Z = \sum_{e \in E} |z(e)|$. Then, for $r > 1$,
    \[
        \|W_\star\|_\infty \le \begin{cases}
            r(r - 1)^\frac{r + 1}{2} \;\; \text{if} \; G \; \text{is unweighted}, \\
            Z (r - 1)^\frac{r + 1}{2} \;\; \text{if} \; G \; \text{is weighted}.
        \end{cases}
    \]
    Consequently, using Theorems \ref{weighted} and \ref{unweighted}, we get: 
    \[
        \|W_\star\|_\infty \le \begin{cases}
            (3n - 2)^\frac{3n - 1}{2} \;\; \text{if} \; G \; \text{is unweighted}, \\
            Z(3m)^\frac{3m + 1}{2} \;\; \text{if} \; G \; \text{is weighted}.
        \end{cases}
    \]
\end{theorem}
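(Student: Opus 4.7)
The plan is to view the constraint $A = P^T W P \odot J$ as an overdetermined linear system in the diagonal of $W$, and then extract an invertible square subsystem to which Cramer's rule and Hadamard's inequality can be applied. Since $\|W_\star\|_0 = r$, I restrict attention to the $r$ rows of $P$ corresponding to nonzero entries of $W_\star$; each pair $i<j$ then contributes the equation $\sum_{a=1}^{r} P_{a,i} P_{a,j} w_a = A_{i,j}$, giving a linear system $Mw = b$ with $M \in \{\pm 1\}^{\binom{n}{2} \times r}$ and $b \in \mathbb{R}^{\binom{n}{2}}$ whose entries are edge weights of $G$ (or zero for non-edges).

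The first real step is to show that the columns of $M$ are linearly independent, by exploiting the minimality $r = \gc(G)$. If a nontrivial combination $\sum_a \alpha_a M_a = 0$ held with $\alpha_{a_0} \neq 0$, then $P^T \mathrm{diag}(\alpha) P \odot J$ would vanish, so $W_\star + t\,\mathrm{diag}(\alpha)$ would remain a valid solution for every $t \in \mathbb{R}$; choosing $t = -w_{a_0}/\alpha_{a_0}$ zeroes out the $a_0$-th diagonal entry and yields a solution with only $r-1$ nonzero entries, contradicting the optimality of $W_\star$. Hence $M$ has rank $r$, and we can select $r$ linearly independent rows to form an invertible $r \times r$ submatrix $\tilde M \in \{\pm 1\}^{r \times r}$ with corresponding right-hand side $\tilde b \in \mathbb{R}^r$, so that $w_\star = \tilde M^{-1} \tilde b$.

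Cramer's rule then gives $w_a = \det(\tilde M^{(a)}) / \det(\tilde M)$, where $\tilde M^{(a)}$ is $\tilde M$ with column $a$ replaced by $\tilde b$. The denominator is a nonzero integer, so $|\det(\tilde M)| \geq 1$. For the numerator I expand along the replaced column to obtain $|\det(\tilde M^{(a)})| \leq \sum_{j=1}^r |\tilde b_j|\,|N_{j,a}|$, where each $(r-1)\times(r-1)$ minor $N_{j,a}$ has $\pm 1$ entries and thus satisfies the Hadamard bound $|N_{j,a}| \leq (r-1)^{(r-1)/2}$. Since the rows of $\tilde M$ index distinct pairs, $\sum_j |\tilde b_j| \leq r$ in the unweighted case (entries in $\{0,1\}$) and $\sum_j |\tilde b_j| \leq Z$ in the weighted case. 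Putting the pieces together yields the claimed $\|W_\star\|_\infty$ bounds, and the consequent follows by substituting $r \leq 3n-2$ (Theorem \ref{unweighted}) and $r \leq 3m+1$ (Theorem \ref{weighted}).

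The main obstacle is the column-independence argument for $M$: without it, the optimal diagonal would not be pinned down by an invertible square subsystem and Cramer's rule would not apply. The affine-shift contradiction above shows that such a dependence would let us strictly reduce $\|W_\star\|_0$, so minimality of $r = \gc(G)$ exactly rules it out. The remainder is a standard Cramer-plus-Hadamard calculation; any residual slack between the estimate obtained above and the stated exponent $(r-1)^{(r+1)/2}$ can be absorbed by applying Hadamard's bound to the full $r \times r$ matrix $\tilde M^{(a)}$ rather than to an $(r-1) \times (r-1)$ minor.
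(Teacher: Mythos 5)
Your proposal is correct and follows essentially the same route as the paper's proof: restrict to the support of $W_\star$, use minimality of $r=\gc(G)$ to show the corresponding $\{\pm 1\}$ columns are linearly independent, extract an invertible $r\times r$ subsystem, and bound the solution via Hadamard's inequality (your Cramer's-rule expansion along the replaced column is the component-wise form of the paper's adjugate bound $\|B^{-1}\|_\infty \le (r-1)^{(r-1)/2}$). Your resulting exponent $(r-1)^{(r-1)/2}$ matches what the paper actually derives and is stronger than the $(r-1)^{(r+1)/2}$ in the theorem statement, so no slack needs absorbing.
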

We defer the proof of this theorem to \refapp{app:proof}. These upper bounds help us show provable optimality of the mixed-integer program using an appropriately large value of $M$. In practice though, it may not make sense to use large strength matrices $W$ due to instability of the quantum system and potential introduction of noise at high strengths. In our computations therefore, we set this constant as the sum of edge weights in the graph, i.e., $M = \sum_e |z(e)|$ for 8 node graphs, and obtain solutions with strengths much smaller than $M$.

We further note that if the objective is to minimize the $L_1$ norm, then the program becomes linear:
\begin{align}
    \min &\sum_{i} b_i\\
    & W_{i,i} \leq b_i , ~~ \text{for } i \in [k],\\
    & W_{i,i} \geq - b_i, ~~ \text{for } i \in [k],\\
    & (P^TWP)_{i,j} = A_{i,j} ~~ \text{for } i\neq j, \\
    &b_{i} \in \mathbb{R_+}, \text{for } i\in [k], 
\end{align}
which is typically easier to solve compared to the $L_0$ norm minimization and can be solved to optimality without using the big-$M$ constraints. 

\section{Proof for Theorem \ref{thm:upperbound}.}\label{app:proof}

For a real matrix $B$, we define $\|B\|_0 = \sum_{B_{i, j} \neq 0} 1$, $\|B \|_1 = \sum_{i, j} |B_{i, j}|$ and $\|B\|_\infty = \max_{i, j} |B_{i, j}|$, that is, we look at the matrix $B$ as a vector and use the corresponding vector norms. Further, for a square matrix $B \in \{-1, 1\}^{r \times r}$, we use the bound known as Hadamard's inequality \cite{garling_inequalities}:
\begin{equation*}\label{eq: Hadamard}
    \det(B) \le r^{\frac{r}{2}}.
\end{equation*}

Suppose $B$ is invertible. For $r = 1$, $\|B^{-1}\|_\infty = 1$. For $r > 1$,
\begin{equation}\label{eq: inverse-bound}
    \|B^{-1}\|_\infty = \frac{\|\text{adj}(B)\|_\infty}{|\det(B)|} \le (r - 1)^\frac{r - 1}{2}.
\end{equation}
The inequality holds because $\det(B) \ge 1$ (due to invertibility) and the entries of $\text{adj}(B)$ are minors of $B$, and are therefore bounded by Hadamard's inequality above.

We now proceed to prove the theorem. Recall the graph coupling problem:
\[
    \min \|W\|_0: \{P^TWP \odot J = A\},
\]
where $W$ is the diagonal strength matrix, $P$ is a \tsout{pulse} matrix with entries in $\pm 1$, $A$ is the weighted adjacency matrix of the graph to be constructed and $J$ is the coupling matrix given by $J_{i,j}=1$ for $i\neq j$ and $0$ otherwise.

As in the MIP formulation in \refapp{app:MIP}, we can assume that the rows of \tsout{pulse matrix} $P$ are all possible sequences $\{ \pm 1 \}^n$ up to sign so that $P$ has $k = 2^{n - 1}$ rows, and $W$ is a $k \times k$ diagonal matrix. Then, for $i \neq j$, the constraint $\big(P^TWP \odot J\big)_{i, j} = \sum_{a = 1}^k P_{a, i} P_{a, j} W_{a, a} = A_{i, j}$ is linear in strengths $W_{a, a}$. Let us say $Q_{(i, j), a} = P_{a, i} P_{a, j}$ for $a \in [k]$ and $i, j \in [n], i \neq j$. For ease of notation, we treat $W$ and $A$ as vectors and throughout this proof, we use the following reformulation of the problem:
\[
    \min_{W \in \mathbb{R}^{k}} \|W\|_0: QW = A,
\]
where $Q$ is an appropriately defined matrix in $\{\pm 1\}^{\frac{n(n - 1)}{2} \times k}$ and $A \in \R^{\frac{n(n - 1)}{2}}$. As in the statement of the theorem, let $W_\star$ be an optimal strength matrix for this problem and let $r = \| W_\star\|_0$.

Let $Q^\prime$ be the submatrix of $Q$ obtained by removing all columns with indices $j$ for which $(W_\star)_j = 0$, and let $W_\star^\prime$ be the corresponding restriction of $W_\star$. Then, since $Q W_\star = A$, we have $Q^\prime W_\star^\prime = A$, and $Q^\prime \in \{ \pm 1 \}^{\frac{n(n - 1)}{2} \times r}$. We claim that the columns of $Q^\prime$ are linearly independent, and therefore that $\text{rank}(Q^\prime) = r$. To see this, let $C_1, \ldots, C_r$ be the columns of $Q^\prime$. Suppose $C_r = \sum_{i = 1}^{r - 1} \alpha_i C_i$ for some numbers $\alpha_i \in \R$. Define $v \in \R^{r}$ as

\[
    v_j = \begin{cases}
        (W_\star^\prime)_j + \alpha_j (W_\star^\prime)_r & \text{if} \; 1 \le j \le r - 1, \\
        0 & \text{if} \; j = r.
    \end{cases}
\]
Then, $Q^\prime v = A$ and $\|v\|_0 \le r - 1 < r = \|W_\star^\prime \|_0$, a contradiction. Therefore, columns $C_1, \ldots, C_r$ are independent.

\medskip
Since $\text{rank}(Q^\prime) = r$, $Q^\prime$ has a set of $r$ linearly independent rows $\{R_1, \ldots, R_r \}$ such that the restriction of $Q^\prime$ to these rows is an invertible matrix. Call this matrix $Q^{\prime\prime}$ and call the corresponding restriction of $A$ as $A^\prime$. Then, $Q^{\prime\prime} W_\star^\prime = A^\prime$, where $Q^{\prime\prime}$ is an invertible matrix in $\{ \pm 1 \}^{r \times r}$. \refeq{eq: inverse-bound} gives us that for $r > 1$,
\begin{align*}
    \|W_\star\|_\infty &= \|W_\star^\prime\|_\infty \\
    &= \big\|\big(Q^{\prime\prime}\big)^{-1} A^\prime \big\|_\infty \\
    &\le \big\|\big(Q^{\prime\prime}\big)^{-1} \big\|_\infty \times \|A^\prime \|_1 \\
    &\le \|A^\prime \|_1 (r - 1)^{\frac{r - 1}{2}}.
\end{align*}

For unweighted graphs, $\|A^\prime\|_1 = \|A^\prime \|_0 \le r$ and therefore $\|W_\star\|_\infty \le r(r - 1)^\frac{r - 1}{2} \le r^\frac{r + 1}{2}$. For weighted graphs, $\|A^\prime \|_1 \le \|A \|_1 \le Z$, establishing the bound.

The second set of bounds follow using bounds on $r$ in Theorems \ref{weighted} and \ref{unweighted}: $r = \gc(G) \le 3n - 2$ for unweighted graphs and $r \le 3m + 1$ for weighted graphs.

% Create reference section using BibTex
\bibliography{CostFunction}

\end{document}